\newcommand{\oen}{[\hspace*{-4.80pt}\ [} 
\newcommand{\cen}{]\hspace*{-4.80pt}\ ]} 
\newcommand{\oenP}{(\hspace*{-4.78pt}\ [} 
\newcommand{\cenP}{]\hspace*{-4.78pt}\ )} 
\newcommand{\notlongrightarrow}{\longrightarrow \hspace*{-15pt} /\ } 
\newtheorem{mydef}{Definition}
\newtheorem{mylemma}{Lemma}
\newtheorem{prope}{Property}
\newtheorem{myprop}{Proposition}
\newtheorem{example}{Example}
\begin{document}

\title{Towards a correct and efficient implementation of simulation and verification tools for probabilistic ntcc}
 \author{Mauricio Toro \\ Universidad Eafit \\ mtorobe@eafit.edu.co}
\bibliographystyle{alpha}
\date{}
\maketitle

\textbf{Keywords: } process algebras, model checking, probabilistic models of computation

\textbf{AMS Mathematics Subject CLassification} 68Q85, 68Q10, 68Q87

\section*{Abstract}
We extended our simulation tool \textit{Ntccrt} for probabilistic \texttt{ntcc} (\texttt{pntcc}) models. In addition, we developed a verification tool for \texttt{pntcc} models. Using this tool we can prove properties such as ``the system will go to a successful state with probability $p$ under $t$ discrete time-units''.

 Currently, we are facing a few problems. We can only verify \texttt{pntcc} models
 using a finite domain constraint system and the encoding of cells ( mathematical entities that can update their value ) is experimental. In addition, in order to
 reduce the states generated during the verification process we need to implement
 a procedure to calculate whether two processes are equivalent.

 In the future, we want to provide multiple interfaces for the tools (e.g., a web application, a graphical interface and command line interface). We also want
 to support constraint systems over trees, graph and sets. We want to show
 the relevance of our tool to model biological and multimedia

 interaction systems in our tool, verify some properties about them, and simulate such systems in our real-time capable interpreter.

Process calculi has been applied to the modeling of interactive music systems
 \cite{toro2018, toro2016, toro2016faust, toro2016gelisp, 2016arXiv160202169T, toro2015ntccrt, is-chapter,tdcr14,ntccrt,cc-chapter,torophd,torobsc,Toro-Bermudez10,Toro15,ArandaAOPRTV09,tdcc12,toro-report09,tdc10,tdcb10,tororeport} 
 and ecological systems \cite{EPTCS2047, PT13,TPSK14,PTA13}. In addition, research on algorithms \cite{PAT2016, MorenoPT17, RestrepoPT17}
 and software engineering \cite{toro2018a} also contributes to this field.

\section{Introduction}
\begin{mydef}
\label{ntccinternal}
\texttt{pntcc} internal transitions.
\end{mydef}

\begin{mylemma}
\label{terminating}
Every sequence of internal sequences is terminating (i.e., there are not infinite sequences) [\textit{ntcc-phd}].
\end{mylemma}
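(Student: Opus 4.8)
The plan is to prove termination by exhibiting a \emph{well-founded measure} on configurations that strictly decreases with every internal transition of Definition~\ref{ntccinternal}. Concretely, I would fix a ranking function $\mu$ from process terms into $\mathbb{N}$ (internal reductions act on the current-instant process component; the store only grows monotonically and can be ignored), and show that whenever $\gamma \longrightarrow \gamma'$ is an internal transition then $\mu(\gamma) > \mu(\gamma')$. Since $(\mathbb{N},<)$ is well-founded, this immediately rules out infinite internal sequences.

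The key point in designing $\mu$ is that the \textbf{next} operator postpones its argument to the following time unit, so no subterm guarded by a \textbf{next} can fire an internal transition in the current instant. I would therefore count only the ``active'' part of a term: set $\mu(\mathbf{skip}) = 0$, $\mu(\mathbf{tell}(c)) = 1$, $\mu(\sum_i \mathbf{when}\ c_i\ \mathbf{do}\ P_i) = 1 + \sum_i \mu(P_i)$ (and analogously for the probabilistic sum of \texttt{pntcc}, whose support is finite), $\mu(P \parallel Q) = \mu(P) + \mu(Q)$, $\mu((\mathbf{local}\ x)\,P) = 1 + \mu(P)$, $\mu(\mathbf{unless}\ c\ \mathbf{next}\ P) = 1$, $\mu(\mathbf{next}\ P) = 0$, and $\mu(!P) = \mu(\,*P\,) = 1 + \mu(P)$. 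The crucial clause is $\mu(\mathbf{next}\ P)=0$, which makes the measure insensitive to whatever sits below a \textbf{next}.

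With $\mu$ in hand the proof is a case analysis over the transition rules. The ``computing'' rules drop $\mu$ outright: \textbf{tell} turns a term of weight $1$ into $\mathbf{skip}$ of weight $0$; a (probabilistic or nondeterministic) summation of weight $1+\sum_i\mu(P_i)$ collapses to a single chosen branch $P_j$ of weight $\mu(P_j)$; and $\mathbf{unless}\ c\ \mathbf{next}\ P$ of weight $1$ becomes $\mathbf{skip}$. The structural rules for $\parallel$ and $\mathbf{local}$ preserve the strict decrease because $\mu$ is additive over $\parallel$ and monotone under the $\mathbf{local}$ context, so a strict decrease in a component propagates to the whole term. If the semantics works modulo a structural congruence (associativity and commutativity of $\parallel$ with $\mathbf{skip}$ as unit, and $\alpha$-renaming of $\mathbf{local}$) rather than via explicit congruence rules, I would observe that $\mu$ is invariant under $\equiv$, so that reducing modulo $\equiv$ does not reintroduce loops.

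The step I expect to be the real obstacle is replication and its bounded analogue: $!P \longrightarrow P \parallel \mathbf{next}\,!P$ and $*P \longrightarrow \mathbf{next}^{\,n} P$. A naive syntactic size would \emph{increase} under the first rule (the body $P$ is copied) and is not even bounded under the second (the number of stacked \textbf{next}s is arbitrary). This is exactly what the clause $\mu(\mathbf{next}\ Q)=0$ handles: after the step $\mu(P\parallel\mathbf{next}\,!P) = \mu(P) + 0 < 1+\mu(P) = \mu(!P)$, and $\mu(\mathbf{next}^{\,n}P)$ equals $0$ for $n\ge 1$ and $\mu(P)$ for $n=0$, in either case strictly below $1+\mu(P)=\mu(\,*P\,)$. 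Once these two cases are dispatched the remaining cases are routine, and well-foundedness of $\mathbb{N}$ closes the argument.
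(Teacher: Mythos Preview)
The paper does not supply its own proof of Lemma~\ref{terminating}: the bracketed tag \textit{[ntcc-phd]} signals that the result is imported verbatim from Valencia's thesis, and in the body of the present paper the lemma is only \emph{used} (in the cases of Proposition~\ref{propositionencoding}) and never argued. So there is no in-paper proof to compare your proposal against.

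That said, your argument is the standard one and is essentially what the cited source does: define a size on the current-instant part of a process by collapsing everything under a \textbf{next} to $0$, and check that each internal rule strictly decreases it. Your treatment of the two delicate cases is right: for $!P \longrightarrow P \parallel \mathbf{next}\,!P$ the clause $\mu(\mathbf{next}\,Q)=0$ kills the copied tail, and for $*P \longrightarrow \mathbf{next}^{\,n}P$ the same clause makes the target $0$ (or $\mu(P)$ when $n=0$), strictly below $1+\mu(P)$. Your remark that $\mu$ is invariant under the structural congruence, and that the context rules for $\parallel$ and $\mathbf{local}$ merely propagate a strict decrease in a subterm, is exactly what is needed to close the case analysis. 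One small point worth making explicit when you write it up: in the LOC rule the $\mathbf{local}$ binder persists across the step, so the ``$+1$'' in $\mu((\mathbf{local}\ x)\,P)=1+\mu(P)$ is never itself consumed; the strict drop comes solely from the premise on the body, and monotonicity of $1+\cdot$ carries it through.
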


\begin{mydef}
Given a  configuration
$\langle P,c \rangle$, where $c$ is 
an initial store (i.e., input)  and $P$ is a \texttt{ntcc} process. We define a \textbf{output} as a store $d$ such that $\langle P,c \rangle \longrightarrow^* \langle Q,d \rangle \notlongrightarrow$, where $\longrightarrow$ means ... $\notlongrightarrow$ means ... $\longrightarrow^*$ means...
\end{mydef}

\begin{mydef}
A Constraint System (CS) is a pair ($\sum$,$\Delta$) where $\sum$ is a signature specifying constants, functions and predicate symbols, and $\Delta$ is consistent first-order they over $\sum$ (i.e., a set of first-order sentences over $\sum$ having at least one model). We say that $c$ entails $d$ in $\Delta$, written $c \models_\Delta d$ iff the formula $c \Rightarrow d$ is true in all models of $\Delta$. We write $\models$ instead of $\models_\Delta$ when $\Delta$ is unimportant [\textit{ntcc-phd}].
\end{mydef}

\begin{mydef}
Let $n > 0$. \textbf{FD(n)} is a CS such that:

- $\sum$ is given by constant symbols $0..n-1$ and the equality.

- $\Delta$ is given by ...
[\textit{ntcc-phd}]
\end{mydef}

\begin{mydef}
A Herbrand CS ...
\end{mydef}

\begin{mydef}
\label{csp}
A \textbf{CSP} is defined as a triple $\langle X, D, C \rangle$ where $X$ is a set of variables, $D$ is a set of domains and $C$ is a set of constraints. A solution for a CSP is an evaluation that satisfies all constraints.
\end{mydef}

The input-output behavior can be interpreted as an interaction between the system $P$ and the environment. At the time unit $i$, the environment provides a stimulus $c_i$ and $P_i$ produces $c_i'$ as a response. As observers, we can see that on input $\alpha$ the process P responds with $\alpha'$. We then recard ($\alpha, \alpha'$) as a reactive observation of P. Given $P$ we shall refer to the set of all its reactive observations as the input-output behavior of $P$ [\textit{ntcc-phd}].

\begin{mydef}
\label{iodef}
$io(P) = \{ (\alpha, \alpha' | P \overset{\alpha,\alpha'}{\Longrightarrow} ^\omega\}$
\end{mydef}

\noindent
We recall the following proposition from the \texttt{pntcc} paper.

\begin{myprop}
Given a pntcc process $P_0$, for every $P_n$ reachable from $P_0$ through an observable sequence, in the DTMC given by DTMC($\langle P_o, true \rangle)$ there exists a path from $\langle P_0, true \rangle$ to $\langle P_n, d \rangle$, for some constraint $d$.
\end{myprop}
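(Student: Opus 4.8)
The plan is to argue by induction on the length $n$ of the observable sequence witnessing that $P_n$ is reachable from $P_0$. Write this sequence as $\langle P_0, true \rangle \Longrightarrow \langle P_1, d_1 \rangle \Longrightarrow \cdots \Longrightarrow \langle P_n, d_n \rangle$, where each step $\Longrightarrow$ is, by the definition of \texttt{pntcc} observable transitions, obtained by running a maximal sequence of internal transitions (Definition~\ref{ntccinternal}) from the current configuration under some environment input and taking the resulting quiescent store as the next observable store. Lemma~\ref{terminating} guarantees that each such internal sequence is finite, so every observable step decomposes into finitely many internal steps followed by a single ``tick''. This decomposition is what lets us translate an observable step into a finite path in the DTMC.

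For the base case $n = 0$ there is nothing to prove: the empty path from $\langle P_0, true \rangle$ to itself witnesses the claim with $d = true$. For the inductive step, assume the DTMC contains a path from $\langle P_0, true \rangle$ to a state of the form $\langle P_{n-1}, d \rangle$. It suffices to show that the single observable step $\langle P_{n-1}, d_{n-1} \rangle \Longrightarrow \langle P_n, d_n \rangle$ induces a path in the DTMC from that state to some $\langle P_n, d' \rangle$, and then to concatenate. Here I would unfold the definition of $\mathrm{DTMC}(\cdot)$: its states are (equivalence classes of) configurations, and its transition relation is obtained by turning each probabilistic internal choice into edges weighted by the declared probabilities, each deterministic internal transition into an edge of probability $1$, and each tick out of a quiescent configuration into an edge of probability $1$. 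Since the probabilities attached to the declared branches in \texttt{pntcc} are strictly positive, every internal transition the operational semantics can actually perform is mirrored by a DTMC edge of nonzero probability; chaining the finitely many edges corresponding to the internal steps of the observable step and then the tick edge gives the required path segment, and composing it with the path from the inductive hypothesis closes the induction.

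The main obstacle I anticipate lies at the interface between the two semantics: one must check that the particular resolution of probabilistic and nondeterministic choice realized by the chosen maximal internal sequence is exactly one of the resolutions the DTMC construction enumerates, so that no reachable branch is ``lost'' when passing to the Markov chain, and that quotienting configurations by the intended equivalence does not collapse $\langle P_n, d_n \rangle$ onto a state lying off the constructed path. A secondary subtlety, which also explains why the statement only asserts the existence of \emph{some} constraint $d$, is that the store $d'$ reached along the DTMC path need not be syntactically identical to $d_n$ --- only logically equivalent to it modulo $\Delta$ (cf.\ the constraint-system entailment $\models_\Delta$). I would make this slack explicit in the induction hypothesis, carrying ``$d'$ with $d' \models_\Delta d_n$ and $d_n \models_\Delta d'$'' rather than literal equality, so that the composition step goes through cleanly.
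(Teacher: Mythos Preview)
Your induction on the length of the observable sequence is a sensible strategy, and the care you take over the gap between the operational stores $d_n$ and the DTMC-side stores $d'$ is well placed. However, there is nothing to compare against: in this paper Proposition~1 is not proved at all. The text immediately preceding it reads ``We recall the following proposition from the \texttt{pntcc} paper,'' and the statement is simply quoted without argument. So the paper treats it as an imported fact from the original \texttt{pntcc} work rather than something to be established here.

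If you want feedback on the proposal itself: the skeleton is fine, but the step where you assert that ``every internal transition the operational semantics can actually perform is mirrored by a DTMC edge of nonzero probability'' is exactly the content of the proposition, not an obvious consequence of the DTMC construction. To make that step go through you would need to invoke the actual definition of $\mathrm{DTMC}(\cdot)$ from the \texttt{pntcc} paper and check case by case (tell, parallel, local, guarded choice, probabilistic choice, tick) that each operational rule contributes an edge; your write-up gestures at this but does not do it. The treatment of nondeterministic $\sum$ is also delicate: a DTMC has no nondeterminism, so the construction must either fix a scheduler or quantify over them, and your argument should make explicit which convention is in force before claiming that the chosen branch appears in the chain.
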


\section{Simulation for \texttt{pntcc}}
In what follows, we explain how to formalize the construction of an interpreter for \texttt{pntcc} and its implementation.

\subsection{Encoding a deterministic, non-timed, non-probabilistic fragment of \texttt{pntcc} as a CSP}
In this section we propose the encoding of a \texttt{pntcc} fragment as a Constraint Satisfaction Problem (CSP).
First, we give some useful definitions. Then, we present the enconding of a \texttt{pntcc} process into a constraint. Following, we prove the correctness of the encoding. Finally, we propose the encoding for the execution of a  
\texttt{pntcc} process and an input (i.e., store) parametrized by a specific scheduler into a CSP. 

We prove that all the solutions of the CSP are
valuations for the output of a \texttt{pntcc} process for such scheduler. An advantage of representing the execution of a process as a CSP is that we can use any constraint solving tool to simulate the execution of a process. 


The following fragment of \texttt{pntcc} does not include temporal operators (i.e., next, !, unless and *), non-deterministic choice (i.e., $\sum$) nor probabilistic choice (i.e., $\bigoplus$).
It is parametrized by a Finite Domain constraint system, which is also parametrized by $2^{32}$, which
is the size of an integer on a 32 bits computer architecture. 




\begin{mydef}
\label{ntccfragment}
A fragment of \texttt{pntcc} parametrized by FD[$2^{32}$]  where
\end{mydef}

\noindent
$P, Q$ ::= $\sum \limits_{i \in I}$ \textbf{when} $c$ \textbf{do} $P$ $|$ $P \| Q$ $|$ \textbf{tell} ($c$) $|$ \textbf{local} $x$ $P$\\ 

\noindent
In order to define the CSP, we need to define its variables. For that reason, we provide the function vars(P), which returns all the
non-local variables used by a \texttt{pntcc} process.

\begin{mydef}
Let vars(P): `` \texttt{pntcc} process'' $\rightarrow$ ``set of variable names'' be recursively defined.
\end{mydef}


vars(\textbf{tell}(c))::= $Cvars(c)$, the variables contained in a constraint.

vars($P || Q$) ::= vars(P)$\cup$vars(Q)

vars($\sum \limits_{i \in I}$ \textbf{when} $c$ \textbf{do} $P$) ::= $\bigcup \limits_{i \in I}$ vars($P_i$)

vars(\textbf{local} x $P$)::= vars(P)$- \{x\}$\\

\noindent
The encoding \oen.\cen codifies a \texttt{pntcc} process into a constraint. 
A key issue for this encoding is representing the non-deterministic process  $\sum \limits_{i \in I}$ \textbf{when} $c_i$ \textbf{do} $P_i$.


We propose the constraint $(c_i \leftrightarrow e_i \wedge f_i \leftrightarrow \oen P_i \cen \wedge (e_i \wedge f_i) \leftrightarrow g_i \wedge \sum \limits_{i \in I} g_i = 1) \vee \bigwedge \limits_{i \in I} \neg c_i$. The idea is posting the constraints associated to a process for at most one process
which guard holds.

We assume
a constraint $c_i \leftrightarrow b$ in the constraint system for each constraint used in the
``when'' processes. These constraints are called \textit{reified constraints}.
The use of constraints as guards of ``when'' processes will be limited by the reified constraints
supplied by the constraint solving tool.

\begin{mydef}
\label{encoding}
Let \oen.\cen:  ``\texttt{pntcc} process'' $\rightarrow$ ``FD constraint'' be defined recursively.
\end{mydef}


\oen\textbf{tell}($c$)\cen ::= $ c $, c is a FD constraint

\oen$P || Q$\cen ::= \oen P\cen$\wedge$\oen Q\cen


\oen $\sum \limits_{i \in I}$ \textbf{when} $c_i$ \textbf{do} $P_i$\cen ::= 
$(c_i \leftrightarrow e_i \wedge f_i \leftrightarrow \oen P_i \cen \wedge (e_i \wedge f_i) \leftrightarrow g_i \wedge \sum \limits_{i \in I} g_i = 1) \vee \bigwedge \limits_{i \in I} \neg c_i$


\oen\textbf{local} x $P$\cen ::= $\exists x. $\oen P\cen\\




\noindent
Using the encoding presented above, we show that for every constraint $c$,  given a process $P$, 
its output is equivalent to the encoding of $P$ in conjuction with the constraint $c$. This proposition makes a link between the output of a \texttt{pntcc} process and a constraint.

\begin{myprop}
\label{propositionencoding}
Let $P$ and $\oen .\cen$  be a process in the \texttt{pntcc} fragment given by Def. \ref{ntccfragment} and the encoding given by Def. \ref{encoding}.
Then, for every constraint $c$ using a scheduler  that chooses the process with minium index which guard holds, it holds that\\
$( \langle P,c \rangle \longrightarrow_{} ^* \langle Q,d \rangle \notlongrightarrow\ \  ) \rightarrow (\oen P \cen\wedge c \equiv d)$  
\end{myprop}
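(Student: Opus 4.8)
The plan is to proceed by structural induction on the process $P$ in the \texttt{pntcc} fragment of Definition \ref{ntccfragment}, showing in each case that the constraint store $d$ reached by the (deterministic, since the scheduler fixes all choices) internal-transition sequence $\langle P,c\rangle \longrightarrow^* \langle Q,d\rangle \notlongrightarrow$ is logically equivalent to $\oen P \cen \wedge c$. By Lemma \ref{terminating} such a normal form $\langle Q,d\rangle$ always exists, so the statement is well-typed; and since the fragment has no temporal or probabilistic operators, the whole computation happens within a single time unit, so we only need to reason about the internal transition relation of Definition \ref{ntccinternal}. I would first set up the induction with a slightly strengthened hypothesis: for all stores $c$, if $\langle P,c\rangle \longrightarrow^* \langle Q,d\rangle \notlongrightarrow$ then $d \equiv \oen P \cen \wedge c$ (entailment in both directions in $\Delta$). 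This uniform quantification over $c$ is what makes the parallel-composition case go through.

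The base cases are \textbf{tell}$(c')$ and \textbf{local}. For $\textbf{tell}(c')$, one internal step adds $c'$ to the store, giving $d \equiv c \wedge c' \equiv c \wedge \oen \textbf{tell}(c')\cen$. For $\textbf{local}\ x\ P$, the operational rule introduces a fresh variable, runs $P$, and then hides $x$ by existential quantification on exit; using the induction hypothesis for $P$ on the store seen inside the local scope, plus the standard cylindric-algebra identities for $\exists x$ (in particular $\exists x(c \wedge e) \equiv c \wedge \exists x\, e$ when $x \notin \mathrm{Cvars}(c)$, which holds because $x$ is taken fresh with respect to $c$), yields $d \equiv c \wedge \exists x.\oen P\cen = c \wedge \oen \textbf{local}\ x\ P\cen$. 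For parallel composition $P \| Q$, the scheduler interleaves the internal steps of $P$ and $Q$; I would argue that the final store does not depend on the interleaving (confluence of constraint accumulation — adding constraints is monotone and commutative up to logical equivalence), run $P$ first from $c$ to get an intermediate store $d_P \equiv c \wedge \oen P\cen$ by the IH, then run $Q$ from $d_P$ to get $d \equiv d_P \wedge \oen Q\cen \equiv c \wedge \oen P\cen \wedge \oen Q\cen = c \wedge \oen P\|Q\cen$.

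The genuinely delicate case — and the main obstacle — is the guarded sum $\sum_{i\in I} \textbf{when}\ c_i\ \textbf{do}\ P_i$, because the encoding replaces the operational ``pick the enabled branch of least index and run it'' with a disjunctive constraint involving the auxiliary reified variables $e_i, f_i, g_i$ and the cardinality constraint $\sum_i g_i = 1$. Here I would split on whether any guard is entailed by $c$. If no $c_i$ is entailed, the process takes no step (the operational meaning of \textbf{when}), so $d = c$; and on the constraint side the disjunct $\bigwedge_i \neg c_i$ is consistent with $c$ while the other disjunct forces some $g_i = 1$ hence $e_i \wedge f_i$ hence $c_i$, a contradiction — so $\oen \cdot\cen \wedge c \equiv c$, matching. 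If some guard holds, let $j$ be the least such index; the scheduler runs $P_j$, so $d \equiv c \wedge \oen P_j\cen$ by the IH, and I must show the disjunctive constraint conjoined with $c$ collapses to exactly $c \wedge \oen P_j\cen$ — i.e. that the reified/cardinality machinery really does select a single branch and expose precisely its encoding. This is where one has to be careful about what the quantifier structure over $e_i, f_i, g_i$ actually is (the encoding as written leaves the binding of these auxiliaries implicit — presumably they are existentially quantified, one fresh triple per branch), and about the fact that equivalence ``$\equiv$'' must be read modulo these auxiliary variables, i.e. as equivalence of the projections onto $\mathrm{vars}(P)$. I expect the cleanest route is to first rewrite $\oen \sum_i \textbf{when}\ c_i\ \textbf{do}\ P_i\cen$ (after projecting away the auxiliaries) into the equivalent ``exactly-one'' form $\bigvee_{j} \bigl(c_j \wedge \oen P_j\cen \wedge \bigwedge_{i\neq j}\neg(\text{second occurrence of }c_i)\bigr) \vee \bigwedge_i \neg c_i$ — noting that the intended reading makes the branches mutually exclusive via $\sum_i g_i = 1$ — and then observe that conjoining with $c$ kills every disjunct except the one for the least entailed index $j$. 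A caveat worth flagging in the write-up: as literally printed, the constraint for the sum has a free index $i$ on the left conjuncts rather than a $\bigwedge_{i\in I}$, and the exclusivity it enforces is ``at most one $P_i$ posted'' rather than necessarily ``the least-indexed one''; making the proposition true for the min-index scheduler may require reading the encoding charitably (or strengthening it), and I would state that assumption explicitly at the start of the sum case.
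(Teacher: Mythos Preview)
Your proposal follows the same overall scheme as the paper --- structural induction on $P$ over the four constructors of the fragment --- and your treatment of \textbf{tell} and of the guarded sum (splitting on whether some $c_i$ is entailed) matches the paper's case analysis essentially step for step. Your explicit strengthening of the induction hypothesis by universally quantifying over the initial store $c$ is something the paper leaves implicit but relies on in the parallel case.

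Where you diverge is in the \textbf{local} case. The paper does \emph{not} appeal to the cylindric-algebra identity $\exists x(c \wedge e) \equiv c \wedge \exists x\,e$ for $x$ fresh; instead, after unfolding the LOC rule once, it performs a \emph{nested} case analysis on the syntactic shape of the body $Q$ (is $Q$ a \textbf{tell}, a sum, a parallel, or another \textbf{local}?), re-deriving the needed equivalence separately in each sub-case. Your route is shorter and more robust --- it handles all bodies uniformly and isolates exactly the algebraic fact about $\exists x$ that is doing the work --- whereas the paper's nested case analysis is more operational but duplicates effort and would need to be revisited if the fragment were extended. Conversely, the paper's approach avoids having to justify freshness of $x$ with respect to the ambient store, which you state but do not argue (you would need to invoke $\alpha$-conversion or the convention that bound names are chosen fresh).

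Your careful discussion of the sum case --- the implicit existential binding of $e_i,f_i,g_i$, the need to read equivalence modulo projection onto $\mathrm{vars}(P)$, and the mismatch between ``exactly one $g_i=1$'' and ``least-index enabled branch'' --- is well taken. The paper's own proof is visibly incomplete at exactly this point (it contains a literal ``???'' in subcase~3(a)), so your flagged caveat is not a defect of your argument but an accurate diagnosis of a gap that any honest proof must address.
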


\begin{proof}
The proof proceeds by induction on the structure of $P$. 

\begin{enumerate}





\item 

$P$ = \textbf{tell}($c$). 

According to the rule TELL (Def. \ref{ntccinternal}) and the fact that \textbf{skip} does not make any internal transition, we must have
$\langle$ \textbf{tell}(c),d $\rangle$ $\longrightarrow_{}$ $\langle \textbf{skip}, d \wedge c \rangle \notlongrightarrow$


Then, we have to prove that $d \wedge c \equiv \oen P \cen \wedge d$ 

Since $\oen P \cen = c$ accoding to Def. \ref{encoding}, we have $d \wedge c \equiv c \wedge d$








\item 

$P$ = $Q \| R$

We recall the rule 

PAR $\frac{\langle Q,c \rangle \longrightarrow_{} \langle Q',d \rangle}{\langle Q \| R, c \rangle \longrightarrow_{} \langle Q' \| R,d \rangle}$

We suppose $\langle Q ,c \rangle \longrightarrow_{} \langle Q', e \rangle$ and $\langle R ,e \rangle \longrightarrow_{} \langle R', f \rangle$ (1)

From (1) and lemma \ref{terminating},  we can deduce
$\langle Q,c \rangle \longrightarrow_{} \langle Q', e \rangle \longrightarrow_{}^* \langle H, w \rangle \notlongrightarrow$

According to the rule PAR, to (1) and to the lemma \ref{terminating}, we must have 

$\langle Q \| R,c \rangle \longrightarrow_{} \langle Q' \| R, e \rangle \longrightarrow_{} \langle Q' \| R' , f \rangle \longrightarrow_{}^* \langle S \| R \rangle \notlongrightarrow$  (2)

by (2) and the inductive hypothesis, we must have

$f \equiv \oen Q \cen \wedge c$ (3)\ \ \ $f \equiv \oen Q' \cen \wedge e$ (4)

from (3) and (4), we have  $\oen Q \cen \wedge c \equiv \oen Q' \cen \wedge e$ (5)

On the other hand, from (2) and the inductive hypothesis, we can deduce

$g \equiv \oen Q \| R \cen \wedge c$\ \ \ (6) $g \equiv \oen Q' \| R \cen \wedge  e$ (7)

From (6) and (7), we deduce
$\oen Q \| R \cen \wedge c \equiv \oen Q' \| R \cen \wedge  e$ (8)

Applying Def. \ref{encoding}, we can deduce from (8)
$\oen Q \cen \wedge \oen R \cen \wedge c \equiv \oen Q' \cen \wedge \oen R \cen \wedge e$ (9)

Finally, replacing (5) in (9) we have
$\oen Q \cen \wedge \oen R \cen \wedge c \equiv \oen Q \cen \wedge \oen R \cen \wedge c$






The case where $\langle Q \| R',c \rangle \longrightarrow_{} \langle Q \| R', m \rangle \longrightarrow_{} \langle Q' \| R' , f \rangle \longrightarrow_{}^* \langle S \| R \rangle \notlongrightarrow$ is trivial since $Q \| R \equiv R \| Q$
according to structural congruence.

\item P = $\sum \limits_{i \in I} $ \textbf{when} $c_i$ \textbf{do} $P_i$

We recall the rule 

SUM $\frac{}{\langle \sum \limits_{i \in I} when\ c_i\ do\ P_i, d \rangle \rightarrow
  <P_i',d>}$ if $d \equiv c_i$ \\


There are two cases:

\begin{enumerate}

\item $\exists c_i . d \equiv c_i$

By lemma \ref{terminating} and SUM,  we must have
$\langle P, d \rangle \longrightarrow_{} \langle P_i, d \rangle \longrightarrow_{}^* \langle Q, e \rangle \notlongrightarrow $

Then, we have to prove 
$e \equiv d \wedge \oen P \cen$ (1)\\
We have $e \equiv d \wedge \oen P' \cen$ (2) by the inductive hypothesis

From (1) and (2) we deduce
$ d \wedge \oen P \cen \equiv d \wedge \oen P_i \cen$

Since $\oen P \cen = \oen P_i \cen$ by to Def. \ref{encoding} and because ???, we  have $d \wedge \oen P \cen \equiv d \wedge \oen P \cen$

\item $\neg  \exists c_i . d \models c_i$

We must have
$\langle P, d \rangle \notlongrightarrow$ 
 .Then, we have to prove that
$d \equiv \oen P \cen \wedge d$.

Since $\oen P \cen = true$ according to Def. \ref{encoding} and because $\bigwedge \limits_{i \in I} \neg ci$, we have $d \equiv d$

\end{enumerate}



\item $P$ = \textbf{local} $x$ $Q$

We recall the rule 

LOC $\frac{\langle Q,c \wedge \exists x d \rangle \rightarrow
  \langle Q',c \rangle}{\langle (local x,c) Q,d \rangle \rightarrow \langle (local x, c') Q', d \wedge
  \exists x c' \rangle}$. \\


According to lemma \ref{terminating} and LOC, we must have 

$\langle (local x, c) Q, d \rangle \longrightarrow_{} \langle (local x, c') Q', d \wedge \exists x c' \rangle \longrightarrow_{}^* \langle R,e \rangle \notlongrightarrow$

We have to prove 
$e \equiv \exists x. \oen Q' \cen \wedge d \wedge \exists x. c' $ (1)\ \ \ 
$e \equiv \exists x. \oen Q \cen \wedge d$ (2)

From (1) and (2) we have
$\exists x. \oen Q' \cen \wedge \exists x. c'  \equiv \exists \oen Q \cen $ (3)

To prove (3), we have to prove all these cases:
\begin{enumerate}

\item Q = \textbf{tell}(h)

We have Q' = \textbf{skip} and $c' = h$

Then, according to Def. \ref{encoding} and replacing Q, Q' and c in (3), we have\\
$\exists x. true \wedge \exists x. h \equiv \exists x. h$

\item Q = $\sum \limits_{i \in I}$ \textbf{when} $c_i$ \textbf{do} $Q_i$ 

Since $\oen Q \cen = \oen Q_k \cen$ by Def. \ref{encoding} and $c' = true$ by rule SUM and because we know that asume that $\langle Q, c \wedge \exists d \rangle \longrightarrow_{} \langle Q', c \rangle$, we have $\exists x. \oen Q \cen \equiv \exists x. \oen Q \cen$

\item Q = $S \| R$

According to rule PAR, we must have $Q' = S' \| R$ and $c' = d$ 

Then, we have to prove that
$\exists x (\oen S' \| R \cen) \wedge \exists x. d \equiv \exists x \oen Q \cen$

We have the following derivation

$\exists x (\oen S' \| R \cen) \wedge \exists x. d  \equiv \exists x ( \oen S' \cen \wedge \oen R \cen) \wedge \exists x . d$ by Def. \ref{encoding}

$\exists x ( \oen S' \cen \wedge \oen R \cen) \wedge \exists x . d \equiv \exists x (\oen S' \cen \wedge d \wedge \oen R\cen)$

$\exists x (\oen S' \cen \wedge d \wedge \oen R\cen) \equiv \exists x(\oen S' \cen \wedge \oen R \cen )$ by inductive hypothesis (recalling proof for the case $P = Q \| R$)

$\exists x \oen S' \cen \wedge \oen R \cen \equiv \exists x \oen Q\cen  $ by Def. \ref{encoding}

\item $Q$ = \textbf{local} $x$ $Q'$

This holds by the inductive hypothesis

\end{enumerate}

\end{enumerate}
\end{proof}

\noindent
Since we can encode the output of a process as a constraint, following, we describe the
relation between executing a process and solving a CSP. We will show that a process $P$
and a constraint $c$ can be rewriten as a CSP, and the solutions of such CSP are all the valuations of the output obtained by executing a process with the scheduler described previously.

\begin{myprop}
Let $P$ be a process in the \texttt{pntcc} fragment given by Def. \ref{ntccfragment} and
$\oen .\cen$ the encoding given in Def. \ref{encoding}
, for every $c$ it holds that the solutions to the CSP 

\begin{itemize}
\item
Variables = $vars(P)$
\item
Domains = $[0..2^{32}-1]$ for each variable
\item
Constraints = $\{c \wedge \oen P \cen\}$
\end{itemize}

\noindent
are
all the valuations for a store $d$ obtained by executing $P$ with an input $c$ described in proposition \ref{}. Formally,\\
$\langle P,c \rangle \longrightarrow_{}^* \langle Q, d \rangle \notlongrightarrow $ $ \rightarrow$ $\forall x.(x = solution(CSP) \leftrightarrow d[x])$
\end{myprop}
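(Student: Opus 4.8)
The plan is to reduce this proposition to the previous one (Proposition~\ref{propositionencoding}) together with the standard equivalence between a constraint and its solution set. First I would observe that the CSP described here has variables $vars(P)$, domains $[0..2^{32}-1]$, and a single constraint $c \wedge \oen P \cen$; by Definition~\ref{csp}, a solution of this CSP is exactly a valuation of the variables in $vars(P)$ that satisfies $c \wedge \oen P \cen$. So the set $\{x \mid x = solution(CSP)\}$ is, by definition, the set of models (restricted to $vars(P)$, over the FD domain) of the constraint $c \wedge \oen P \cen$. This is the first reduction step and it is essentially unpacking Definition~\ref{csp}.

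Next I would invoke Proposition~\ref{propositionencoding}. Assuming $\langle P,c \rangle \longrightarrow^* \langle Q,d \rangle \notlongrightarrow$ under the minimum-index scheduler, that proposition gives $\oen P \cen \wedge c \equiv d$. Two logically equivalent constraints have the same set of models, so the models of $c \wedge \oen P \cen$ coincide with the models of $d$. Combining with the first step: $x$ is a solution of the CSP iff $x$ is a model of $d$ (over the FD domain, restricted to the relevant variables), i.e. iff $d[x]$ holds in the notation of the statement. That is precisely the biconditional $\forall x.(x = solution(CSP) \leftrightarrow d[x])$ that we must establish, and it follows once the two equivalences are chained.

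The remaining work is bookkeeping about variable scope: I would need to check that $vars(P)$ contains exactly the free (non-local) variables occurring in $\oen P \cen$ and in $d$, so that "model of the constraint" and "solution of the CSP over $vars(P)$" really do line up, and that existentially quantified local variables in $\oen P \cen$ (introduced by the \textbf{local} clause of Definition~\ref{encoding}) are correctly hidden — a valuation on $vars(P)$ satisfies $\exists x.\oen Q \cen$ iff it extends to a valuation satisfying $\oen Q \cen$, matching the semantics of \textbf{local}. I would also note that the antecedent $\langle P,c\rangle \longrightarrow^* \langle Q,d\rangle \notlongrightarrow$ is well-defined because Lemma~\ref{terminating} guarantees the internal reduction terminates, so a unique such $d$ exists for the given scheduler.

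The main obstacle I expect is not the logic but the precise treatment of the store $d$ versus the constraint $\oen P\cen \wedge c$ at the level of free variables and the $\exists$-hidden local variables: one must be careful that $d$ may mention variables not in $vars(P)$ only as existentially quantified (hidden) ones, and that the projection onto $vars(P)$ of the model sets agrees. Once that correspondence is pinned down, the proof is a two-line chain: solutions of the CSP $=$ models of $c \wedge \oen P\cen$ $=$ models of $d$ (by Proposition~\ref{propositionencoding}) $=$ valuations with $d[x]$.
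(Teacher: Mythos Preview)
Your proposal is correct and follows essentially the same route as the paper: invoke Proposition~\ref{propositionencoding} to get $d \equiv \oen P\cen \wedge c$, then use Definition~\ref{csp} to identify CSP solutions with valuations satisfying that constraint. The paper's proof is in fact a terse two-line version of exactly this chain and omits the variable-scope and termination bookkeeping you mention, so your write-up is, if anything, more careful than the original.
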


\begin{proof}
According to proposition \ref{propositionencoding}, it holds that the store $d$ is equivalent to the constraint \oen P \cen$\wedge c$  after executing a process $P$ with a store $c$. By Def. \ref{csp}, we know that a solution for a CSP satisfies all its constraints, thus, all the solutions of a CSP are all the possible valuations that satisfies its constraints. Therefore, all the possible solutions for the CSP, satisifies the store $d$.
\end{proof}

\noindent
The correctness of our tool will be based on solving a CSP correctly. Fortunately, there are multiple techniques and theories about how to solve a CSP composed by FD constraints. In the implementation, we will left the problem of solving a CSP to a constraint solving library called Gecode \cite{fastprop}. In order to execute the $i-th$ process instead of the first process, we define a pre-encoding for the sum process 

\oen $\sum \limits_{i \in I}$ \textbf{when} $c_i$ \textbf{do} $P_i$  \cen$_{R}$ ::=
$\sum \limits_{i \in R(I)}$ \textbf{when} $c_i$ \textbf{do} \oen$P_i$\cen$_{R}$\\ where $R :$ ``set of indexes'' $\rightarrow$ ``set of indexes'' change the other of the indexes.






\subsection{Adding time}
In this section we explain how we can extend the encoding proposed in Def. \ref{encoding} for the ``time'' operators. Then, we propose an abstract
machine capable of simulating a finite number of \texttt{pntcc} time-units. Finally, we will prove that there is a relation
between the execution of the abstract machine and the execution of \texttt{pntcc} process. 

\begin{mydef}
\label{timedpntccfragment}
A non-probabilistic fragment of \texttt{pntcc} parametrized by FD[$2^{32}$] where
\end{mydef}

\noindent
$P, Q$ ::= $\sum \limits_{i \in I}$ \textbf{when} $c_i$ \textbf{do} $P_i$ $|$ $P \| Q$ $|$ \textbf{tell} ($c$) $|$ \textbf{local} $x$ $P$
 $|$ \textbf{next} $P$ $|$ \textbf{unless} $c$ \textbf{next} $P$ $|$ $\textbf{!}P$\\








\noindent
The following encoding is a function that takes a \texttt{pntcc} process as given by Def. \ref{timedpntccfragment} and returns
a pair composed by the constraint associated to that process (based on Def. \ref{encoding}) and the process to be executed in the next time-unit (based on the definition of F(P)).

\begin{mydef}
\label{encodingpntccM}
Let $\oen.\cen_T$:  ``\texttt{pntcc} process'' $\rightarrow$ ``pair'' be defined recursively.
\end{mydef}


\oen\textbf{tell}($c$)$\cen_T$ ::= $(  c, \textbf{skip} )$, c is a FD constraint

\oen$P || Q$$\cen_T$::= $(c \wedge d, R\| S )$, where $(c, R) = \oen P\cen_T\ and\ (d, S) = \oen Q\cen_T$


\oen $\sum \limits_{i \in I}$ \textbf{when} $c_i$ \textbf{do} $P_i$$\cen_T$::= 
$(c_i \leftrightarrow e_i \wedge f_i \leftrightarrow \oen P_i \cen \wedge (e_i \wedge f_i) \leftrightarrow g_i \wedge \sum \limits_{i \in I} g_i = 1) \vee \bigwedge \limits_{i \in I} \neg c_i$


\oen\textbf{local} x $P$$\cen_T$ ::= $( \exists x. c, \textbf{local}\ x\  Q )$ , where $( c,Q) =  \oen P\cen_T $\\

\oen\textbf{next} $P$$\cen_T$ ::= $( \texttt{true}, P )$

\oen\textbf{unless} $c$ \textbf{next} $P$$\cen_T$ ::= 
$$ 
               \begin{cases}
                        ( \texttt{true}, \textbf{skip} ) &, c \leftrightarrow b \wedge b = true \\
                        ( \texttt{true}, P ) &, otherwise 
               \end{cases}
       $$

\oen\textbf{!}$P$$\cen_T$ ::= $( c, \textbf{!}P \|Q )$ , where $(c,Q) =  \oen P\cen_T $\\

\begin{example}
\label{parallelexample}
Let $P$ = \textbf{tell}(c) $\|$ \textbf{next} \textbf{tell} (d) $\|$ !(\textbf{tell}(e)$\|$\textbf{tell}(f)). \\
Then, $\oen P \cen_T$ = $\langle c \wedge e \wedge f, \textbf{tell}(d) \| \textbf{!}(\textbf{tell}(e)\|\textbf{tell}(f)) \rangle$
\end{example}

\begin{myprop}
\label{propencodingt}
Let $P$ be a process given by Def. \ref{timedpntccfragment}, for any constraint $c$, it holds for every FD constraint $c$ that \\
$\langle P,c \rangle \longrightarrow_{}^* \langle Q, d \rangle \rightarrow (d \equiv e \wedge c \ and\ F(Q) \equiv R)$ \\ where  $F(Q)$ is the future function applied to Q and $(e,R) = \oen P \cen_T$ the encoding given by Def. \ref{encodingpntccM} applied to $P$
\end{myprop}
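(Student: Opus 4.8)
The plan is to mirror the structure of the proof of Proposition \ref{propositionencoding}, extending it to handle the three new timed operators \textbf{next}, \textbf{unless} \textbf{next}, and \textbf{!}. As before I would proceed by induction on the structure of $P$. For the common fragment (\textbf{tell}, $\|$, \textbf{local}, and the \textbf{when}-sum), the two components of the claim split cleanly: the ``constraint'' component $d \equiv e \wedge c$ follows essentially from Proposition \ref{propositionencoding} applied within a single time-unit, since during a single time-unit the internal transitions behave exactly as in the untimed fragment; the ``continuation'' component $F(Q) \equiv R$ follows by unfolding the definition of the future function $F(\cdot)$ alongside the second projection of $\oen \cdot \cen_T$ and checking that they agree recursively. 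So the first step would be to state precisely the chain $\langle P, c \rangle \longrightarrow^* \langle Q, d \rangle \notlongrightarrow$ and observe that at a quiescent point within the time-unit, the store accumulated is exactly what Proposition \ref{propositionencoding} predicts for the instantaneous part of $P$.

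Next I would handle the genuinely new base cases. For $P = \textbf{next}\ R$: the process makes no internal transition (it is quiescent immediately), so $d \equiv c$; meanwhile $\oen \textbf{next}\ R \cen_T = (\texttt{true}, R)$, so $e = \texttt{true}$ gives $d \equiv \texttt{true} \wedge c$, and $F(\textbf{next}\ R) = R$ matches the continuation. For $P = \textbf{unless}\ c'\ \textbf{next}\ R$: split on whether the current store entails $c'$ (equivalently, on the reified bit $b$). If it does, the process reduces to \textbf{skip}, contributing nothing to the store and nothing to the next time-unit, matching $(\texttt{true}, \textbf{skip})$; if it does not, the process is dormant this time-unit and $F$ yields $R$, matching $(\texttt{true}, R)$. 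For $P = \textbf{!}R$: use that $F(\textbf{!}R) = \textbf{!}R \| F(R)$, and that $\oen \textbf{!}R \cen_T = (c, \textbf{!}R \| Q)$ with $(c, Q) = \oen R \cen_T$; the store contribution of $\textbf{!}R$ in the current time-unit is precisely the store contribution of one unfolded copy $R$, by Proposition \ref{propositionencoding}, and the replicated copy reappears unchanged.

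Then the inductive cases for $\|$ and \textbf{local} are assembled from the sub-results exactly as in Proposition \ref{propositionencoding}: for $P = R \| S$ one interleaves the internal transitions of the two components, applies the induction hypothesis to each to get $d \equiv e_R \wedge e_S \wedge c$ and $F(R\|S) = F(R) \| F(S) \equiv R' \| S'$, and reads off $\oen R \| S \cen_T = (e_R \wedge e_S, R' \| S')$; for \textbf{local} one pushes the $\exists x$ through both projections using that $F(\textbf{local}\ x\ R) = \textbf{local}\ x\ F(R)$. The main obstacle I expect is the interaction between the replication/\textbf{unless} operators and \textbf{local}-scoped variables — in particular making the store-equivalences go through when a $\exists x$ binder is duplicated by $\textbf{!}$ or survives across the time-unit boundary — together with the fact that $\oen \cdot \cen_T$ as written in Def. \ref{encodingpntccM} does not obviously return a pair in the \textbf{when}-sum clause (the second component is missing), so I would first need to read that clause as returning $(\text{that constraint}, \textbf{skip})$, consistent with $F$ of a guarded sum being \textbf{skip}. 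Care is also needed because $c$ is both the bound store symbol in the LOC rule and the ambient input constraint in the statement; I would rename to avoid the clash.
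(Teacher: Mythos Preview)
The paper does not actually prove this proposition: the proof environment contains only the placeholder ``\ldots\ Pending''. There is therefore nothing substantive to compare your proposal against.

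That said, your plan is the natural one and is almost certainly what the authors intended: structural induction on $P$, reusing the argument of Proposition~\ref{propositionencoding} for the non-temporal operators and adding base cases for \textbf{next}, \textbf{unless}~\textbf{next}, and \textbf{!}. Your treatment of each new operator matches the shape of $\oen\cdot\cen_T$ and of the future function $F$, and your observation that the parallel and local cases decompose componentwise (store part via Proposition~\ref{propositionencoding}, continuation part via the recursive second projection) is the right organizing idea.

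You are also right to flag the two defects in the paper's definitions that any proof must work around: the \textbf{when}-sum clause of Definition~\ref{encodingpntccM} fails to return a pair (reading the missing second component as \textbf{skip} is the only sensible completion, consistent with $F$ on a guarded sum), and the variable $c$ is overloaded between the input store and the LOC rule's local store. The interaction of \textbf{!} and \textbf{local} with the future function is indeed where care is required, but nothing in your outline is wrong; it is simply more complete than what the paper provides.
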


\begin{proof}

... Pending
\end{proof}

In order to execute $P$, the pntccM machine first need to encode $P$ into a suitable machine term. A machine term $V$ is a triple composed by a FD constraint, a process and an integer.

\begin{mydef}
Syntax of pntccM. 
\end{mydef}

V ::= $\langle c,Q,j \rangle$, where

c is a  Finite Domain constraint

Q is a process defined in Def. \ref{timedpntccfragment}

$j > 0$\\

\noindent
The following function is used to encode a \texttt{pntcc} process into a pntccM term for a simulation of $n$ time-units.

\begin{mydef}
\label{encodingmachineterm}
Encoding a \texttt{pntcc} process into a pntccM term.

$\oenP P \cenP_{n,I} = \langle c \wedge I,Q, n \rangle, $ where $(c,Q) = \oen P \cen_T$

\end{mydef}

Once a process has been encoded to a machine term using Def. \ref{encodingmachineterm}, it can be executed by the machine. A given
term can be executed if $i > 0$. The reduction depends on the input
from the environment. The new machine term is formed by the output of
the process, the future function applied to the process and $i-1$.

\begin{mydef}
Reduction in pntccM for an input $I$.

\end{mydef}

$\langle c, P,i  \rangle \longrightarrow_{}^I \langle d \wedge I,Q, i-1 \rangle$, where $ (d,Q) = \oen P \cen$ and $i > 0$ 

$\langle c, P,i  \rangle \notlongrightarrow^I\ \ $, when $i \leq 0$\\

\noindent
Next, we define a finite simulation of the pntccM machine.

\begin{mydef}
Let $n$ be the number of time-units to simulate, $P$ a pntcc process defined in Def. \ref{timedpntccfragment}, and $I$ a sequence of $n$ inputs (FD constraints). A simulation $S_{P,n,I}$ is a sequence $c_1...c_n$ such that $\langle c_1, Q_1, n \rangle \longrightarrow^{I_2} \langle c_2, Q_2, n-1 \rangle ... \longrightarrow^{I_{n-1}} \langle c_n, Q_n,0\rangle \notlongrightarrow$\ \ and $\oenP P \cenP_{n,I_1} = \langle c_1, Q_1, n \rangle$
\end{mydef}

\begin{example}
Let $P$ be the process defined in example \ref{parallelexample}.
Then, a simulation $S_{P,5,b^5} = [c \wedge b \wedge e \wedge f, b \wedge d \wedge e \wedge f, b \wedge e \wedge f, b \wedge e \wedge f, b \wedge e \wedge f ]$
\end{example}

There is a relation between the input-output behavior of a \texttt{pntcc} process and a simulation of the pntccM machine.

\begin{myprop}
\label{simprop}
Let $S$ be a simulation parametrized by  a process $P$ given by Def. \ref{timedpntccfragment}, an integer $n$ and a sequence of FD constraints $I$.
$io(P) [1..n]$ (i.e., the first $n$ elements of the sequence) is equal to $S_{P,n,I}$
\end{myprop}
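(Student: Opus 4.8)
The plan is to prove Proposition~\ref{simprop} by induction on $n$, the number of simulated time-units, using Proposition~\ref{propencodingt} as the bridge between one step of the \texttt{pntccM} machine and one observable transition of the underlying \texttt{pntcc} process. First I would make precise what $io(P)[1..n]$ means operationally: by Def.~\ref{iodef} it is the prefix of length $n$ of the sequence of stores $c_1, c_2, \dots$ produced by the observable transitions $\langle P, I_1\rangle \Longrightarrow \langle F(P), c_1\rangle$, $\langle F(P), I_2\rangle \Longrightarrow \dots$, where each observable step $\overset{\alpha,\alpha'}{\Longrightarrow}$ is the transitive closure $\longrightarrow^*$ of internal transitions followed by application of the future function $F$. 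I would also unfold the definition of $S_{P,n,I}$ so that the two sequences are being compared componentwise.

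The base case $n=1$ (or $n=0$, depending on indexing) reduces to a single application of Proposition~\ref{propencodingt}: the machine term $\oenP P \cenP_{n,I_1} = \langle c_1, Q_1, n\rangle$ has $c_1 \equiv e \wedge I_1$ and $Q_1 \equiv R$ where $(e,R) = \oen P \cen_T$, and Proposition~\ref{propencodingt} states exactly that if $\langle P, I_1\rangle \longrightarrow^* \langle Q, d\rangle$ then $d \equiv e \wedge I_1$ and $F(Q) \equiv R$; so the first store of the simulation coincides with the first reactive observation of $P$, and the residual machine process $Q_1$ coincides with $F(Q)$, which is precisely the process that carries the computation into time-unit $2$. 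For the inductive step, assuming the claim for processes and simulations of length $n-1$, I would observe that after the first machine reduction $\langle c_1, Q_1, n\rangle \longrightarrow^{I_2} \langle c_2, Q_2, n-1\rangle$ the remaining trace $\langle c_2, Q_2, n-1\rangle \longrightarrow^{I_3} \dots \longrightarrow \langle c_n, Q_n, 0\rangle$ is itself the simulation $S_{Q_1, n-1, I'}$ with $I' = I_2 \dots I_n$, and that $Q_1 \equiv F(P)$; by the induction hypothesis this equals $io(F(P))[1..n-1]$, and by the definition of $io$ we have $io(P)[2..n] = io(F(P))[1..n-1]$, so prepending the already-established first component $c_1 = io(P)[1]$ gives $io(P)[1..n] = S_{P,n,I}$.

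The main obstacle I anticipate is twofold. First, the statement as written says the sequences are ``equal'', but Proposition~\ref{propencodingt} only delivers equivalence up to $\equiv$ (logical equivalence of constraints) and structural congruence of processes; so I would either restate the conclusion of Proposition~\ref{simprop} as equality modulo $\equiv$ componentwise, or argue that the store chosen by the machine is a canonical representative. Second, and more delicate, is the handling of the scheduler: $io(P)$ in Def.~\ref{iodef} ranges over \emph{all} reactive observations, hence is in general a relation rather than a function, whereas $S_{P,n,I}$ is deterministic because the encoding $\oen\cdot\cen$ and the machine reduction fix a particular scheduler (the minimum-index scheduler from Proposition~\ref{propositionencoding}). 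I would resolve this by restricting attention to the input-output behaviour \emph{induced by that scheduler}, i.e.\ interpreting $io(P)$ here as the observation sequence determined by the same minimum-index choice used throughout Section~2 on the given input sequence $I$; this is the only reading under which the equality can hold, and it must be stated explicitly. Once that alignment is fixed, the remaining bookkeeping — matching the future function $F$ against the process component of the encoding $\oen\cdot\cen_T$ at each step, and checking that the inputs $I_i$ are threaded identically in both the operational semantics and the machine — is routine and handled uniformly by repeated appeal to Proposition~\ref{propencodingt}.
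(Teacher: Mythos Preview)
Your proposal is correct and follows essentially the same approach as the paper: an induction over the sequence, with both the base case and the inductive step discharged by appeal to Proposition~\ref{propencodingt} (together with Def.~\ref{encodingmachineterm} for the initial term). The paper's own proof is a much terser version of yours --- it inducts on the index $i$ rather than on the length $n$, which is the same argument in a different dress --- and it does not address either of the two obstacles you identify (equality versus $\equiv$, and the implicit scheduler restriction on $io$), so your treatment is in fact more careful than the original.
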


Proving proposition \ref{simprop} we show that the pntccM gives the same output as a process because an input-output sequence generated
by a process is equal to a simulation of the machine. Then, every output given by a machine is calculated by a process and viceversa.

\begin{proof}
The proof proceeds by induction over the io(P)[1..n] and the simulation $S$ sequence. 
Let $\alpha_1...\alpha_n = io(P)$ and $c_1...c_n = S$.

Base case:

$\alpha_1 = c_1$
is a collorary of proposition \ref{propencodingt} and Def. \ref{encodingmachineterm}.

Inductive case:

Let $\alpha_i = c_i$ be the inductive hypothesis.
We must prove that $\alpha_{i+1} = c_{i+1}$.
This is also a collorary of proposition \ref{propencodingt}.

\end{proof}

\subsection{Adding probabilistic choice}
In this section we will show how to encode a non-timed fragment of \texttt{pntcc} as a sequence of Propagation Problems (PP). The advantage of this approach is that we can make a implementation of this fragment of \texttt{pntcc} only using a constraint solving tool based on propagators and a random-number generation library. We will also show 
the correctness of the encodings as usual.

\begin{mydef}
Process up to level j. Let  PUL$_j$ : ``\texttt{pntcc} process'' $\rightarrow$ ``\texttt{pntcc} process''  be defined recursively.
\end{mydef}

PUL$_j$(\textbf{tell}(c)) ::= \textbf{tell}(c)

PUL$_j$($P || Q$)::= PUL$_j$($P$)$\|$PUL$_j$($Q$)

PUL$_j$($\sum \limits_{i \in I}$ \textbf{when} $c_i$ \textbf{do} $P_i$) ::=$\sum \limits_{i \in I}$ \textbf{when} $c_i$ \textbf{do} PUL$_j(P_i)$

PUL$_j$(\textbf{local} $x$ $P$) ::= \textbf{local} $x$ PUL$_j(P)$

PUL$_j$($\bigoplus \limits_{i \in I}$ \textbf{when} $c_i$ \textbf{do} $P_i$, $a_i$) ::= 
 $$ 
               \begin{cases}
                        \bigoplus \limits_{i \in I}\ \textbf{when}\ c_i\ \textbf{do}\ PUL_{j-1}(P_i), a_i &, j \geq 0\\
                        skip&, j  < 0 \\
               \end{cases}
       $$

\begin{mydef}
Probabilistic processes at level j. Let  PPAL$_j$ : ``\texttt{pntcc} process'' $\rightarrow$ ``set of \texttt{pntcc} process''  be defined recursively.
\end{mydef}

PPAL$_j$(\textbf{tell}(c)) ::= $\emptyset$

PPAL$_j$($P || Q$)::= PPAL$_j$($P$)$\cup$PPAL$_j$($Q$)

PPAL$_j$($\sum \limits_{i \in I}$ \textbf{when} $c_i$ \textbf{do} $P_i$) ::=$\bigcup \limits_{i \ I}$ PPAL$_j(P_i)$

PPAL$_j$(\textbf{local} $x$ $P$) ::= PPAL$_j(P)$

PPAL$_j$($\bigoplus \limits_{i \in I}$ \textbf{when} $c_i$ \textbf{do} $P_i$, $a_i$) ::= 
 $$ 
               \begin{cases}
                        \bigcup \limits_{i \in I} PPAL_{j-1}(P_i) \cup P&, j \geq 0\\
                        \emptyset&, j  < 0 \\
               \end{cases}
       $$

\begin{prope}
\label{propprobabilistic}
For any $b$, It exists $c$ and $c'$, such that $\langle PUJ_j(P), b \rangle \longrightarrow \langle R ,c$ and $\langle P, b \rangle \longrightarrow \langle S ,c ' \rangle$ For each $P' \in PPAL_j(P)$, $\langle \oen P' \cen_P , c\rangle \longrightarrow \langle Q, c \rangle \rightarrow \langle P', c \rangle \longrightarrow \langle Q, d \rangle$
\end{prope}

\begin{mydef}
\label{pntccfragment}
A fragment of \texttt{pntcc} parametrized by FD[$2^{32}$] where
\end{mydef}

\noindent
$P, Q$ ::= $\sum \limits_{i \in I}$ \textbf{when} $c_i$ \textbf{do} $P_i$ $|$ $P \| Q$ $|$ \textbf{tell} ($c$) $|$ \textbf{local} $x$ $P$ $\|$ $\bigoplus \limits_{i \in I}$ \textbf{when} $c_i$ \textbf{do} $P_i, a_i$ \\
and $P,Q$ holds the property \ref{propprobabilistic}

\begin{mydef}
The \textbf{Maximum $\bigoplus$ nested depth (mnp)} function. Let mnp : \texttt{pntcc} $\rightarrow$ $\mathbb{N}$ be recursively defined.
\end{mydef}

mnp(\textbf{tell}($c$)) = 0

mnp($P \| Q$) = max( mnp($P$), mnp($Q$))

mnp(\textbf{local} $x$ $P$) = mnp($P$)

mnp($\sum \limits_{i \in I}$ \textbf{when} $c_i$ \textbf{do} $P_i$) = max($\{j, j = mnp(P_i) \wedge i \in I \}$)

mnp($\bigoplus \limits_{i \in I}$ \textbf{when} $c_i$ \textbf{do} $P_i$, $a_i$) = max($\{j, j = mnp(P_i) \wedge i \in I \}$) + 1








\begin{mydef}
The \textbf{Boolean Variables of a Probabilistic Process (ProbGuards)} function. Let ProbGuards : ``\texttt{pntcc} process'' $\rightarrow$ ``set of tuples $\langle$ process, index, boolean var$\rangle$'' be defined recursively
\end{mydef}

ProbGuards(\textbf{skip}) = ProbGuards(\textbf{tell}($c$) = $\emptyset$

ProbGuards($P \| Q$) =  ProbGuards($P$)$\cup$ProbGuards($Q$)

ProbGuards($\sum \limits_{i \in I}$ \textbf{when} $c_i$ \textbf{do} $P_i$) = $\bigcup \limits_{i \in I}$ProbGuards($P_i$)

ProbGuards(\textbf{local} $x$ $P$) = $\{$ProbGuards($P$)$\}$

ProbGuards($\bigoplus$ \textbf{when} $c_i$ \textbf{do} $P_i$, $a_i$)  =
$   \bigcup \limits_{i \in I} ProbGuards_{j-i}(P_i) \cup   \{\exists b. (P,i,b) \wedge c_i \leftrightarrow b  \wedge i \in I \}  $

\begin{mydef}
getBool(P,i,b) ...
\end{mydef}

\begin{mydef}
ChooseProb(set)
\end{mydef}

\begin{mydef}
The encoding \oen .\cen$_j$ 
\end{mydef}

\oen\textbf{skip}\cen$_j$ ::= true

\oen\textbf{tell}($c$)\cen$_j$ ::= $ c $, c is a FD constraint

\oen$P || Q$\cen$_j$ ::= \oen P\cen$\wedge$\oen Q\cen


\oen $\sum \limits_{i \in I}$ \textbf{when} $c_i$ \textbf{do} $P_i$\cen$_j$ ::=
 $$ 
               \begin{cases}
                        \oen P_k \cen_j &, k \in \{h,  h \in I \wedge c_j \leftrightarrow b \wedge b = true \} \\
                        true &, otherwise 
               \end{cases}
       $$

\oen\textbf{local} x $P$\cen$_j$ ::= $\exists x. $\oen P\cen$_j$

\oen$\bigoplus \limits_{i \in I}$ \textbf{when} $c_i$ \textbf{do} $P_i$, $a_i$\cen$_j$  =

 $$ 
               \begin{cases}
                  \bigwedge \limits_{i \in I} c_i \rightarrow getBool(P,i,B) &, j = 0 \\
                  ChooseProb \{(P_i, a_i) , getBool(P,i,B) = true \wedge i  \in I \} &, j = 1 \\
                   true  &, j > 0\\
               \end{cases}
       $$

\begin{myprop}
$\langle PUJ_j(P), c \rangle \longrightarrow^* \langle Q, d \rangle \notlongrightarrow \wedge \forall b, b \in guards \rightarrow ... c  \wedge \oen P \cen_{j,G} \equiv d\ and\ \exists g \in G c_i \leftrightarrow b$
\end{myprop}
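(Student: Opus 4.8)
The plan is to proceed by induction on the structure of the process $P$, combined with an outer induction on the nesting level $j$ of probabilistic choice, since the encoding $\oen P \cen_j$ and the auxiliary constructions $PUL_j$, $PPAL_j$, $ProbGuards$ are all defined by recursion on both the process structure and $j$. The statement should be read as a refinement of Proposition \ref{propositionencoding}: once the probabilistic guards $G$ (the boolean variables $b$ with $c_i \leftrightarrow b$ recorded by $ProbGuards$) have been fixed by the scheduler / random choice, the process $PUL_j(P)$ behaves deterministically, and we want $c \wedge \oen P \cen_{j,G} \equiv d$ for the final store $d$ reached by $\langle PUL_j(P), c \rangle \longrightarrow^* \langle Q, d \rangle \notlongrightarrow$. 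So the first step is to make the statement precise: pin down that $G$ is the set of boolean reification variables produced by $ProbGuards(P)$, that "$\forall b, b \in guards$" means each such $b$ has been assigned a truth value consistent with $c_i \leftrightarrow b$ under the store, and that $\oen P \cen_{j,G}$ denotes the encoding $\oen P \cen_j$ with those guard values substituted in.

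Next I would handle the non-probabilistic cases (\textbf{tell}, $\|$, $\sum$ \textbf{when}, \textbf{local}) exactly as in the proof of Proposition \ref{propositionencoding}, since $PUL_j$ is the identity on those constructors and the encoding clauses coincide; the only change is carrying the guard set $G$ through the induction, which is purely bookkeeping. The genuinely new case is $P = \bigoplus_{i \in I}$ \textbf{when} $c_i$ \textbf{do} $P_i, a_i$. Here I would split on the level $j$ following the three-way case split in the definition of $\oen \cdot \cen_j$: for $j = 0$ the encoding records the reified guards $c_i \rightarrow getBool(P,i,B)$ and contributes nothing to the store beyond what the guards already entail, which matches the internal transition for $\bigoplus$ at the bottom level; for $j = 1$ the encoding performs $ChooseProb$ over the pairs $(P_i, a_i)$ whose guard boolean is true, selecting one branch $P_k$, and I would appeal to Property \ref{propprobabilistic} to argue that the branch selected in the operational semantics and the branch selected by $ChooseProb$ agree once $G$ is fixed, reducing this to the inductive hypothesis applied to $P_k$ at level $j-1 = 0$; for $j > 1$ the encoding is $true$ because the choice has not yet been "unfolded" to the current level, and correspondingly $PUL_j$ leaves deeper $\bigoplus$'s as $skip$ via $PUL_{j-1}$, so nothing is added to the store. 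In each sub-case the goal $c \wedge \oen P \cen_{j,G} \equiv d$ follows by substituting the encoding clause and invoking the inductive hypothesis on the chosen sub-process, together with the second conjunct $\exists g \in G.\ c_i \leftrightarrow b$, which is discharged directly from the definition of $ProbGuards$.

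The main obstacle I expect is the $j = 1$ case: establishing that the operational choice of probabilistic branch and the $ChooseProb$ selection are forced to coincide once the guard set $G$ is fixed. This requires that the random draw used by the machine and the nondeterministic $\bigoplus$-reduction of \texttt{pntcc} are coupled through exactly the same data — the booleans $getBool(P,i,B)$ and the weights $a_i$ — and that Property \ref{propprobabilistic} indeed gives the needed correspondence between $\langle PUL_j(P), b \rangle$ and $\langle P, b \rangle$ reductions. A secondary difficulty is that the definitions of $ProbGuards$, $getBool$ and $ChooseProb$ are only sketched in the excerpt, so part of the work is to fill in their specifications enough that the substitution $\oen P \cen_{j,G}$ is well-defined; I would state the required properties of these auxiliaries as explicit sub-lemmas and use them as black boxes, mirroring how Lemma \ref{terminating} is used in the earlier proof.
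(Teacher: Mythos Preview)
The paper provides no proof whatsoever for this proposition: there is no \texttt{proof} environment following it, and the statement itself is manifestly unfinished (it contains a literal ``\ldots'', uses the notation $\oen P \cen_{j,G}$ which is never defined, and even writes $PUJ_j$ where presumably $PUL_j$ is meant). So there is nothing in the paper to compare your proposal against.

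That said, your plan is a sensible reconstruction of what a proof \emph{ought} to look like given the surrounding material. Mirroring the structural induction of Proposition~\ref{propositionencoding} for the non-probabilistic constructors, and isolating the $\bigoplus$ case with a secondary induction on the level $j$, is exactly the shape the definitions of $PUL_j$, $PPAL_j$, $ProbGuards$ and $\oen\cdot\cen_j$ suggest. Your identification of the $j=1$ sub-case as the crux --- where the operational probabilistic choice must be shown to coincide with $ChooseProb$ once the guard booleans in $G$ are fixed --- is correct, and your instinct to black-box the missing specifications of $getBool$ and $ChooseProb$ as explicit sub-lemmas is the only way to proceed given that the paper leaves those definitions as stubs. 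Just be aware that because the statement is incomplete, part of your ``proof'' is really a proposal for what the proposition should say; you are filling in the ``\ldots'' as much as proving anything, and there is no guarantee the author intended the same completion.
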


\begin{myprop}
Copy from the other one...

Let $i$ such that $0 < i < mnp(P)$. After calculating mutual fixpoints for $mnp(P)+1$ propagation problems for
\end{myprop}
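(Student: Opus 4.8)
The plan is to prove this by induction on $mnp(P)$, iterating the argument of Proposition \ref{propositionencoding} once for each level of probabilistic nesting. Since the statement as written trails off, I first read the intended claim as follows: after computing the mutual fixpoints of the $mnp(P)+1$ propagation problems obtained from the encodings $\oen P \cen_0, \oen P \cen_1, \ldots, \oen P \cen_{mnp(P)}$ (together with the input constraint $c$, and with the probabilistic guards fixed by one run of ChooseProb), the resulting store $d$ is exactly the output of $\langle P, c\rangle$ under the operational semantics with the corresponding scheduler, i.e. $\langle P, c\rangle \longrightarrow^* \langle Q, d\rangle \notlongrightarrow$. The hypothesis $0 < i < mnp(P)$ is then used only to index the intermediate levels.

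For the base case $mnp(P)=0$ the process contains no $\bigoplus$, so $PUL_j(P)=P$, the encoding $\oen P \cen_0$ coincides with $\oen P \cen$ of Def. \ref{encoding} on this fragment, and there is a single propagation problem; the claim reduces directly to Proposition \ref{propositionencoding}. For the inductive step, assume the result for all processes of nesting depth at most $k$ and let $mnp(P)=k+1$. I would use $PUL_j$ to peel the outermost $\bigoplus$ layer from the sub-processes (which all have depth $\le k$, since $PUL_j$ strictly decreases the level counter), and use $PPAL_j(P)$ to enumerate exactly the probabilistic sub-processes living at that top level. The level-$0$ propagation problem fixes the truth values of the reified guards $c_i \leftrightarrow b$ via getBool; the level-$1$ problem invokes ChooseProb on $\{(P_i,a_i)\}$ to commit to one branch with the prescribed weights; the remaining levels handle the nested probabilistic sub-processes by the induction hypothesis. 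The key lemma to invoke at each layer is Property \ref{propprobabilistic}, which already asserts that a reduction of $PUL_j(P)$ and a reduction of $P$ agree on the non-probabilistic store and that each $P' \in PPAL_j(P)$ is faithfully simulated by its encoding $\oen P' \cen_P$; composing these across the $mnp(P)+1$ levels and closing under parallel composition exactly as in the $P = Q \| R$ case of Proposition \ref{propositionencoding} yields the equivalence of stores.

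The main obstacle, I expect, is justifying the phrase ``mutual fixpoints'': one must show that running the $mnp(P)+1$ propagation problems to a joint fixpoint is well defined — that it terminates (using Lemma \ref{terminating} for the internal reductions at each level, together with the fact that the level counter strictly decreases so the recursion bottoms out) and that it is independent of the order in which the individual propagators fire, which is the confluence of FD constraint propagation. A secondary subtlety is the probabilistic bookkeeping: as in Proposition \ref{propositionencoding}, the claim must be read relative to a fixed resolution of the random choices, so that ChooseProb at each level is a deterministic selection once its outcome is fixed, and the proof should spell out that the scheduler induced on the operational side (which guard is taken, which probabilistic branch) is precisely the one determined by those outcomes. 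Once these two points are settled, the induction closes routinely, and the conclusion is stated for the full fixpoint at level $mnp(P)$.
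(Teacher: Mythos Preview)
There is no proof in the paper to compare against. The proposition itself is an unfinished placeholder: the text ``Copy from the other one...'' is literally a drafting note, the statement trails off mid-sentence, and what follows the proposition (the description of $PP_0$, its variables, domains, and constraints, then ``Calculate a mutual fixpoint for all the propagators'', ``Add the constraints = $\oen PUL_j(P) \cen_j$'', and ``and we think about the last PP as a CSP, the solutions to $vars_{mnp(P)+1}$ of the CSP ...'') is still part of the incomplete \emph{statement}, not a proof. The paper then moves directly to the Implementation subsection with no \texttt{proof} environment.

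Your proposal is therefore not so much a proof attempt as a speculative reconstruction of what the proposition was meant to say, followed by a plausible strategy. That reconstruction is reasonable and is broadly consistent with the surrounding machinery ($PUL_j$, $PPAL_j$, $mnp$, $ProbGuards$, Property~\ref{propprobabilistic}), and your induction-on-$mnp(P)$ outline with Proposition~\ref{propositionencoding} as the base case is the natural shape such an argument would take. But you should be aware that you are filling in both the theorem and its proof: the paper commits to neither, so there is nothing authoritative to validate your reading of ``mutual fixpoints'' or your treatment of the scheduler and ChooseProb against. In particular, the paper never states confluence of the level-wise propagation or independence of firing order, so your identification of those as the main obstacles is your own (sensible) addition, not something the paper resolves.
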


$PP_0$
\begin{itemize}
\item
Variables = \textit{vars$_i$(P)}$\cup$ $\{z , (x,y,z) =$ ProbGuards(P) $\}$
\item
Domains = 
$$
\begin{cases}
[0..2^{32}-1] &, var \in vars(PUL_0(P) \\
[0..1] &, var \notin vars(PUL_0(P) \\
\end{cases}
$$
\item
Constraints = $c \wedge \oen PUL_0(P) \cen_0$
\end{itemize}

Calculate a mutual fixpoint for all the propagators

Add the constraints = $\oen PUL_j(P) \cen_j$

and we think about the last PP as a CSP, the solutions to $vars_{mnp(P)+1}$ of the CSP ...

\subsection{Implementation}
\textit{Ntccrt} \cite{tororeport} is written in C++ and it can generate stand-alone programs executing a \texttt{ntcc} model. \textit{Ntccrt}  can also use Flext to generate the externals for either Max or Pd \cite{max}, and Gecode \cite{fastprop} for constraint solving and concurrency control.
Gecode is an efficient constraint solving library, providing efficient propagators (narrowing operators reducing the set of possible values for some variables). 
The basic principle of \textit{Ntccrt} is encoding the ``when'' and ``tell'' 
processes as Gecode propagators.

Although Gecode was designed to solve combinatorial problems, Toro
found out in  \cite{tororeport} that writing the ``when'' process as a
propagator, Gecode can manage all the
concurrency needed to represent \texttt{ntcc}.
Following, we explain the encoding of the ``tell'' and the ``when''.

To represent the ``tell'', we define a super class $Tell$. For \textit{Ntccrt}, we provide three subclasses to represent these processes: \textbf{tell} ($a=b$), \textbf{tell} ($a \in B$), and \textbf{tell} ($a > b$).
 Other kind of ``tells'' can be easily defined by inheriting
from the $Tell$ superclass and declaring an $Execute$ method.

We have a \textit{When propagator} for the ``when'' and a \textit{When} class for calling the propagator.
A process
\textbf{when} $C$ \textbf{do} $P$ is represented by two propagators:
$C \leftrightarrow b$ (a reified propagator for the constraint $C$) and \textbf{if} $b$ \textbf{then} $P$ \textbf{else} $skip$ (the \textit{When  propagator}).
The \textit{When propagator} checks the value of $b$. If the value of $b$ is true, it calls the \textit{Execute} method
of $P$. Otherwise, it does not take any action. Figure \ref{fig:whenprop} shows how to encode the process \textbf{when} $a=c$ \textbf{do} $P$ using our \textit{When propagator}.

\begin{figure}[!h]
  \begin{center}
{\includegraphics[width=0.48\columnwidth]{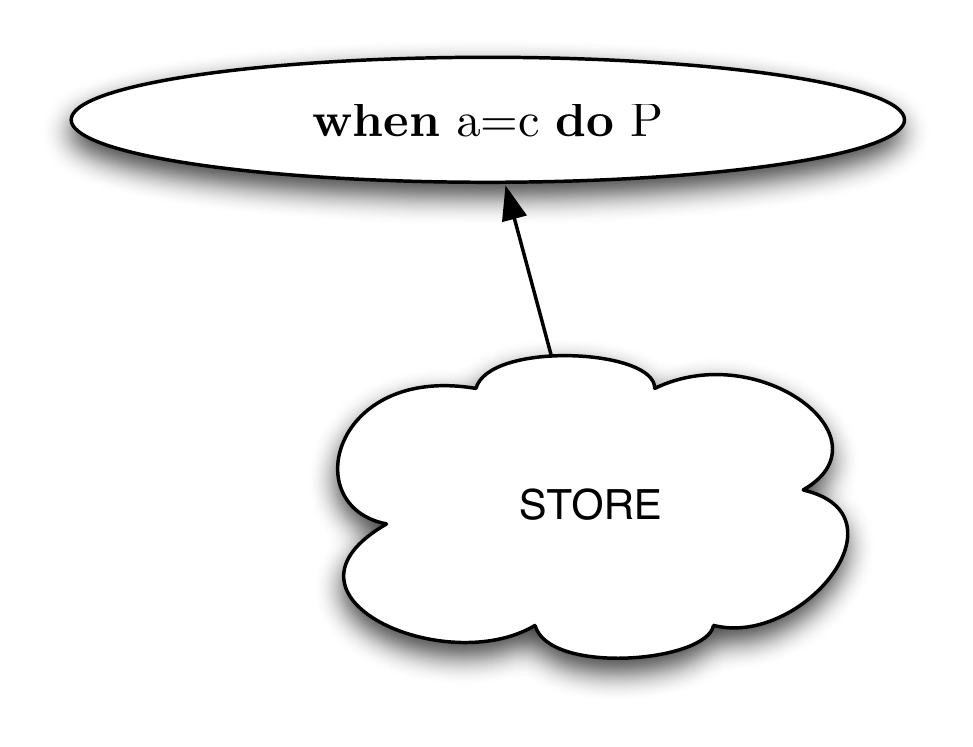}} \ \ 
{\includegraphics[width=0.48\columnwidth]{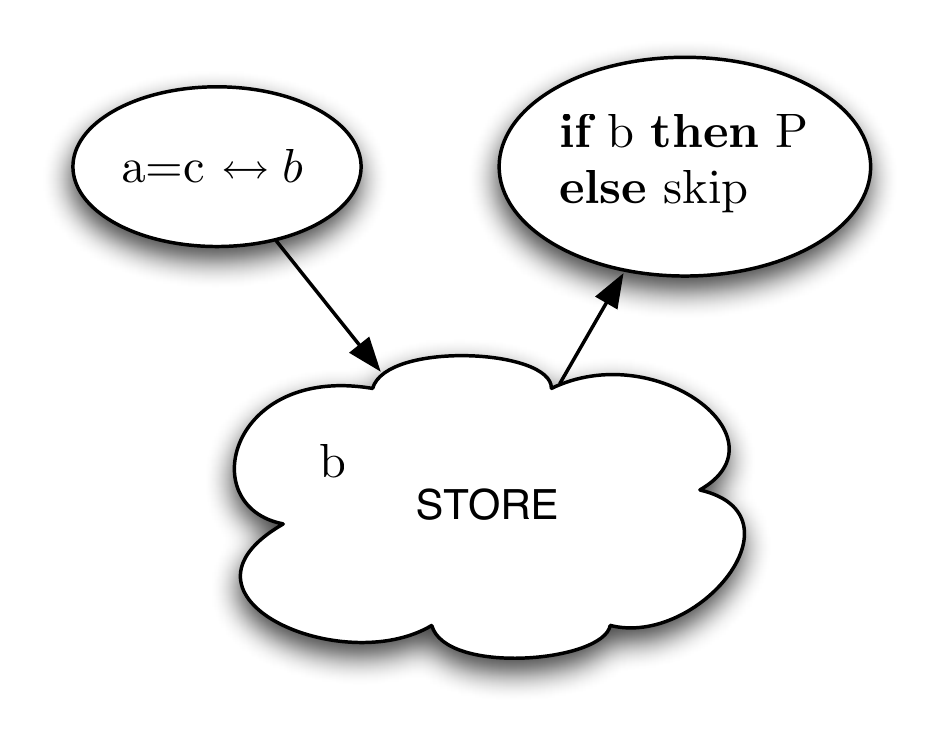}} 
    \caption{Example of the \textit{When propagator}}
    \label{fig:whenprop}
  \end{center}
\end{figure}

We have a \textit{$\sum$ propagator} ...

\section{Verification}
Finally, we will extend the abstract machine to calculate a Discrete-Time Markov Chain (DTMC) and we will show its correctness.
\subsubsection{Verification of non-probabilistic pntcc}

A key aspect of \texttt{pntcc} is that it can be used for both, simulation and verification. Following, we define another abstract machine that calculates DTMCs instead of a simple sequence of outputs. Using the DTMC we can prove PCTL properties. This machine is an extension of pntccM and it is also based on the encoding given by Def.  \ref{encodingpntccM}.

In order to execute $P$, the VerificationPntccM machine first need to encode $P$ into a suitable machine term. A machine term $V$ is a tuple composed by a FD constraint, a process, a DTMC and an integer.

\begin{mydef}
Syntax of VerificationPntccM. 
\end{mydef}

V ::= $\langle A,B,C,j \rangle$, where

A is a  Finite Domain constraint

B is a process defined in Def. \ref{timedpntccfragment}

C is a DTMC (i.e., a tuple $\langle Q_{OBS}, T_{OBS}, LM \rangle$)

$j > 0$

The following function is used to encode a \texttt{pntcc} process into a VerificationPntccM term to calculate a DTMC representing a $n$ time-units execution.

\begin{mydef}
\label{encodingmachinetermV}
Encoding a \texttt{pntcc} process into a VerificationPntccM term.

$\oenP P \cenP_{n,I} = \langle P_1 \wedge I, P_2, \langle n, \emptyset, (n, P_1) \rangle n \rangle, $ where $(P_1,P_2) = \oen P \cen_T$

\end{mydef}

\begin{mydef}
Reduction in VerificationPntccM.

\end{mydef}

$\langle A, B,C,i  \rangle \longrightarrow^I \langle B_1 \wedge I, B_2,\\ \langle Q_{OBS} \cup \{ i-1 \} , \gamma_0, T_{OBS} \cup \{(i, i-1,1.0)\}, LM \cup \{ (B,B_1 \wedge I)\}\rangle, i-1 \rangle$\\ where $( B_1 ,B_2) = \oen B \cen$, $\langle Q_{OBS}, \gamma_0, T_{OBS}, LM \rangle = C$ and $i > 0$ \\

$\langle A, B,C,i  \rangle \notlongrightarrow$\ \ , when $i \leq 0$

\begin{myprop}
Given a \texttt{pntcc} process $P$, the DTMC given by  the first ? states of the  DTMC($\langle P, true \rangle$) is an isomorph of the verification structure Ver(P,n)...
\end{myprop}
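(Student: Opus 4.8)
The plan is to establish the isomorphism by induction on $n$, the number of time-units, using the close structural correspondence between the two constructions. First I would make precise what ``isomorph of verification structures'' means: a bijection $\phi$ between the state set of the first $n$ levels of DTMC($\langle P, true\rangle$) and the state set $Q_{OBS}$ produced by $\mathrm{Ver}(P,n)$, such that $\phi$ preserves transitions together with their probabilities and respects the labelling map $LM$ (so that $LM$-labels and the constraint components match up to the equivalence $\equiv$ on stores). Since the non-probabilistic fragment of Def.~\ref{timedpntccfragment} is being used here, every observable transition in DTMC($\langle P, true\rangle$) has probability $1.0$, which matches the transitions $(i, i-1, 1.0)$ inserted by the VerificationPntccM reduction rule; this is the reason the chain is really a path, and it is exactly the structure recorded in $T_{OBS}$.

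Next I would set up the induction. For the base case ($n=0$, or $n=1$ depending on indexing conventions), the encoding $\oenP P \cenP_{n,I}$ of Def.~\ref{encodingmachinetermV} produces the initial term $\langle P_1 \wedge I, P_2, \langle n, \emptyset, (n,P_1)\rangle, n\rangle$ with $(P_1,P_2) = \oen P \cen_T$; by Proposition~\ref{propencodingt} the store component $P_1 \wedge I$ is equivalent to the store reached by one observable step of $\langle P, c\rangle$ and $P_2 \equiv F(Q)$ for the residual process $Q$. This gives the base of the bijection at the first level. For the inductive step, assume $\phi$ is an isomorphism on the first $i$ levels; I would apply the reduction rule of VerificationPntccM once more to extend $Q_{OBS}$, $T_{OBS}$ and $LM$ by the $i$-th layer, and apply Proposition~\ref{propencodingt} (in the same way as in the proof of Proposition~\ref{simprop}) to show that the new store and residual process agree, up to $\equiv$ and $F$, with the $i$-th step of the operational semantics that generates DTMC($\langle P, true\rangle$). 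Extending $\phi$ to this new layer and checking that the freshly added transition has the right source, target and probability $1.0$ closes the induction.

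The main obstacle I anticipate is not the probabilistic bookkeeping (which is trivial in the non-probabilistic fragment) but the identification of states: DTMC($\langle P, true\rangle$) may a priori identify two configurations whose store and continuation are semantically equal but syntactically different, whereas VerificationPntccM as written inserts a fresh index $i-1$ at every step and never merges. So ``isomorph'' must be read up to the equivalence of configurations $\langle P, c\rangle \sim \langle P', c'\rangle$ iff $c \equiv c'$ and $P \equiv P'$ (structural congruence plus $F$), and I would need the hypothesis implicit in the statement that, along a single observable execution, no two distinct time-units produce equivalent configurations — or else restrict the claim to the path obtained before the first such coincidence. A secondary, more bureaucratic obstacle is that the proposition is stated with placeholders (``the first ? states''), so part of the proof is simply pinning down that the number of states on each side is $n$ (or $n+1$), which again follows directly from the reduction rule halting precisely when $i \le 0$. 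Once these conventions are fixed, the argument is a routine transfer of Proposition~\ref{propencodingt} through the inductive construction, exactly parallel to the proof of Proposition~\ref{simprop}.
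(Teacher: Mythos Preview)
Your proposal is consistent with the paper's approach: the paper's entire proof reads, verbatim, ``The proof proceeds by induction DTMC($\langle P, true \rangle$)'', i.e.\ it is a one-line stub indicating an inductive argument along the construction of the DTMC, with no further detail. Your plan to induct on $n$ and transfer Proposition~\ref{propencodingt} step by step (mirroring the proof of Proposition~\ref{simprop}) is exactly the natural way to flesh out that stub, and you have supplied considerably more structure---the explicit definition of the isomorphism, the use of Proposition~\ref{propencodingt} at each step, and the discussion of state identification---than the paper itself provides.

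One minor remark: the paper phrases the induction as being ``on DTMC($\langle P, true\rangle$)'' rather than on $n$, but since the DTMC in the non-probabilistic fragment is a length-$n$ path generated one observable step at a time, the two inductions coincide; there is no substantive difference. Your caveats about the placeholder ``?'' and about state identification up to $\equiv$ are well taken and are precisely the loose ends the paper leaves open.
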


\begin{proof}
The proof proceeds by induction DTMC($\langle P, true \rangle$)
\end{proof}
\section{Applications}

\subsection{Herman's Stabilization protocol}
\subsection{Description}
A self-stabilising protocol for a network of processes is a protocol which, when started from some possibly illegal start configuration, returns to a legal/stable configuration without any outside intervention within some finite number of steps. For further details on self-stabilisation see here.

In each of the protocols we consider, the network is a ring of identical processes. The stable configurations are those where there is exactly one process designated as "privileged" (has a token). This privilege (token) should be passed around the ring forever in a fair manner.

For each of the protocols, we compute the minimum probability of reaching a stable configuration and the maximum expected time (number of steps) to reach a stable configuration (given that the above probability is 1) over every possible initial configuration of the protocol.

The first protocol we consider is due to Herman [Her90]. The protocol operates synchronously, the ring is oriented, and communication is unidirectional in the ring. In this protocol the number of processes in the ring must be odd.

Each process in the ring has a local boolean variable xi, and there is a token in place i if xi=x(i-1). In a basic step of the protocol, if the current values of xi and x(i-1) are equal, then it makes a (uniform) random choice as to the next value of xi, and otherwise it sets it equal to the current value of x(i-1).
\subsection{Simulation}

\begin{figure}[!h]
  \begin{center}
{\includegraphics[width=0.78\columnwidth]{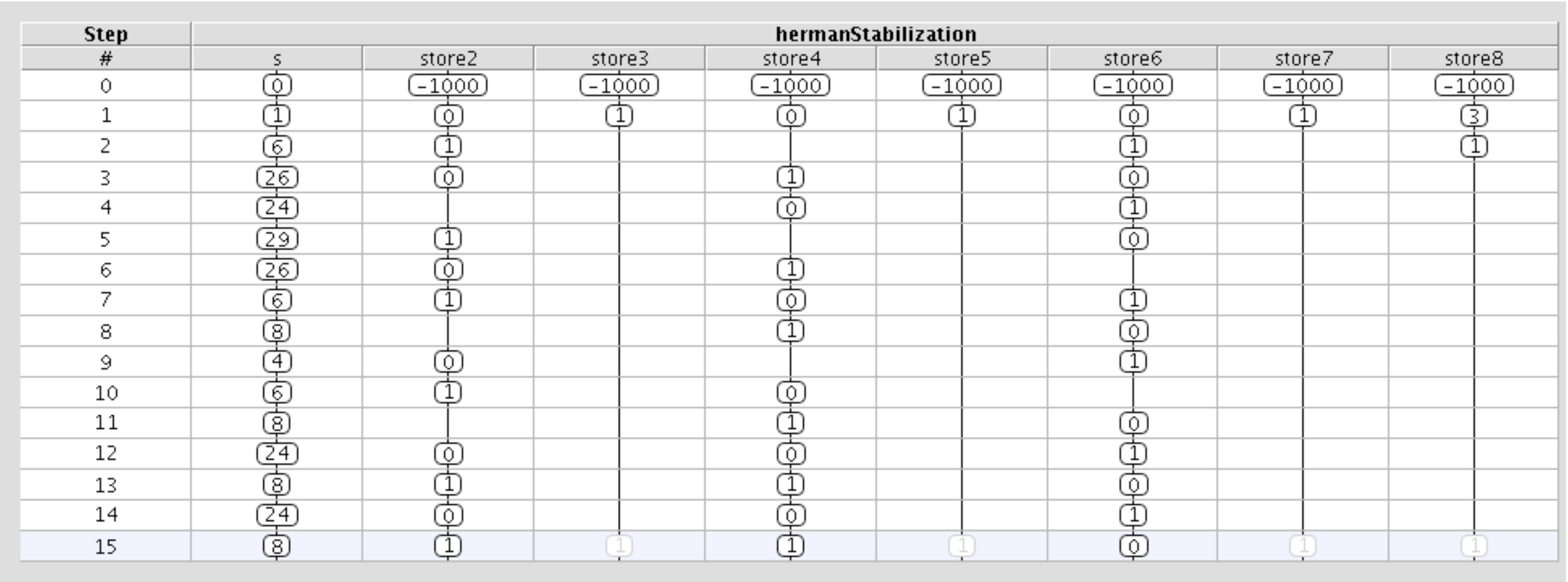}}  \\
 
    \caption{simulation}
    \label{fig:whenprop}
  \end{center}
\end{figure}
\begin{verbatim}
"Ntccrt Simulation" 
"" 
"" 
"Time Unit \# 0" 
"Number of processes= 7" 
"" 
"Variables Value (Only those specified are printed)" 
"num\_tokens" 
"3" 
"x1" 
"0" 
"x2" 
"0" 
"x3" 
"0" 
"changex1" 
"1" 
"changex2" 
"1" 
"changex3" 
"1" 
"..........................................." 
"" 
"Time Unit # 1" 
"Number of processes= 18" 
"" 
"Variables Value (Only those specified are printed)" 
"num_tokens" 
"1" 
"x1" 
"1" 
"x2" 
"0" 
"x3" 
"0" 

"..........................................." 
"" 
"Time Unit # 2" 
"Number of processes= 22" 
"" 
"Variables Value (Only those specified are printed)" 
"num_tokens" 
"1" 
"x1" 
"0" 
"x2" 
"1" 
"x3" 
"1" 
"changex1" 
"1" 
"changex2" 
"1" 
"changex3" 
"1" 
"..........................................." 
"" 
"Time Unit # 3" 
"Number of processes= 26" 
"" 
"Variables Value (Only those specified are printed)" 
"num_tokens" 
"1" 
"x1" 
"1" 
"x2" 
"0" 
"x3" 
"0" 
"changex1" 
"1" 
"changex2" 
"1" 
"changex3" 
"1" 
"..........................................." 
"" 
"Time Unit # 4" 
"Number of processes= 30" 
"" 
"Variables Value (Only those specified are printed)" 
"num_tokens" 
"1" 
"x1" 
"0" 
"x2" 
"1" 
"x3" 
"1" 
"changex1" 
"1" 
"changex2" 
"1" 
"changex3" 
"1" 
"..........................................." 
"" 
"Time Unit # 5" 
"Number of processes= 34" 
"" 
"Variables Value (Only those specified are printed)" 
"num_tokens" 
"1" 
"x1" 
"1" 
"x2" 
"0" 
"x3" 
"0" 
"changex1" 
"1" 
"changex2" 
"1" 
"changex3" 
"1" 
"..........................................." 
"" 
"Time Unit # 6" 
"Number of processes= 38" 
"" 
"Variables Value (Only those specified are printed)" 
"num_tokens" 
"1" 
"x1" 
"0" 
"x2" 
"1" 
"x3" 
"1" 
"changex1" 
"1" 
"changex2" 
"1" 
"changex3" 
"1" 
"..........................................." 
"" 
"Time Unit # 7" 
"Number of processes= 42" 
"" 
"Variables Value (Only those specified are printed)" 
"num_tokens" 
"1" 
"x1" 
"1" 
"x2" 
"0" 
"x3" 
"1" 
"changex1" 
"1" 
"changex2" 
"1" 
"changex3" 
"1" 
"..........................................." 
"" 
"Time Unit # 8" 
"Number of processes= 46" 
"" 
"Variables Value (Only those specified are printed)" 
"num_tokens" 
"1" 
"x1" 
"1" 
"x2" 
"1" 
"x3" 
"0" 
"changex1" 
"1" 
"changex2" 
"1" 
"changex3" 
"1" 
"..........................................." 
NIL 
T
\end{verbatim}
CL-USER 10 > 

\subsection{Model Checking}

We first check the correctness of the protocol, namely that:

From any configuration, a stable configuration is reached with probability 1

We then studied the following quantitative properties:

The minimum probability of reaching a stable configuration within K steps (from any configuration)

\begin{figure}[!h]
  \begin{center}
{\includegraphics[width=0.78\columnwidth]{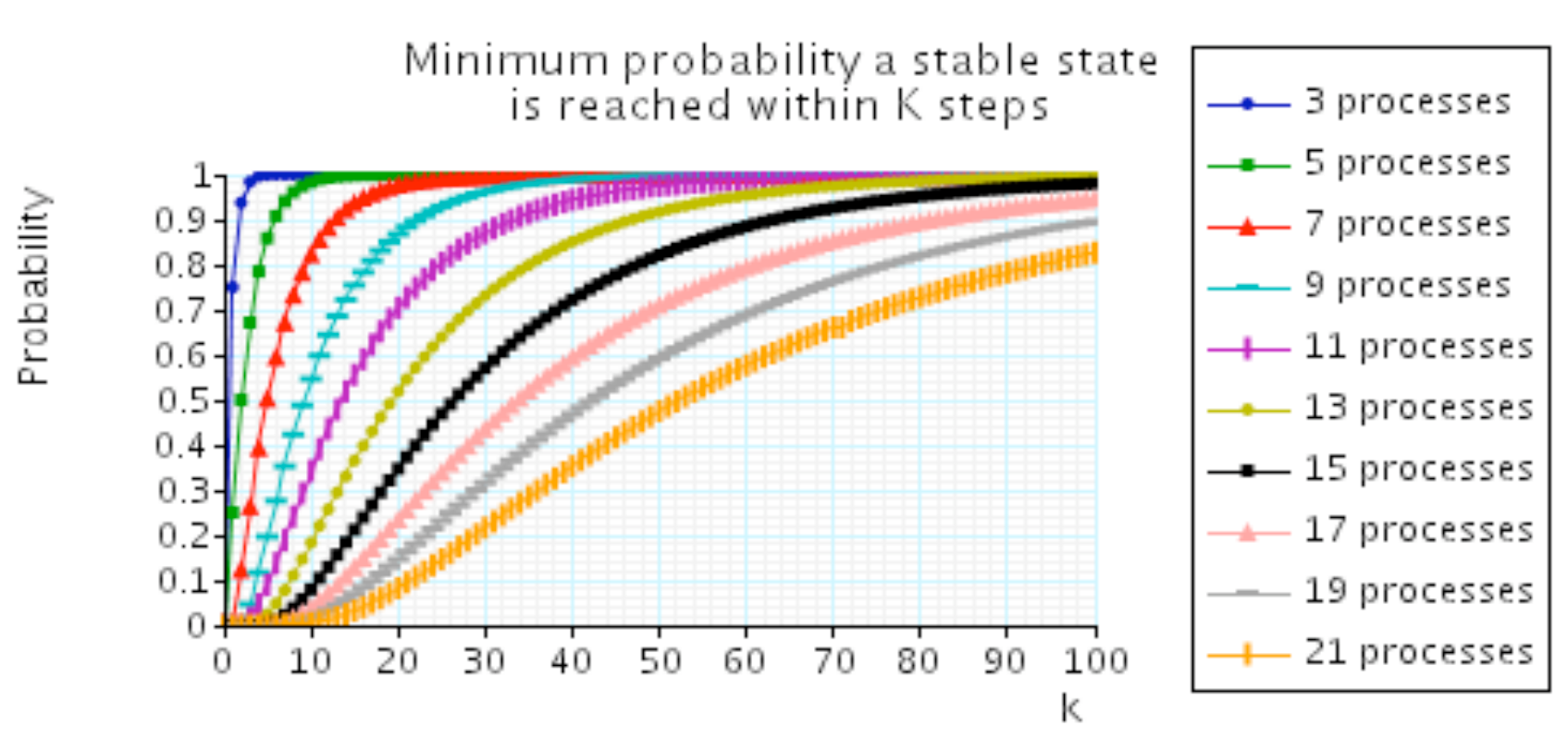}}  \\
{\includegraphics[width=0.78\columnwidth]{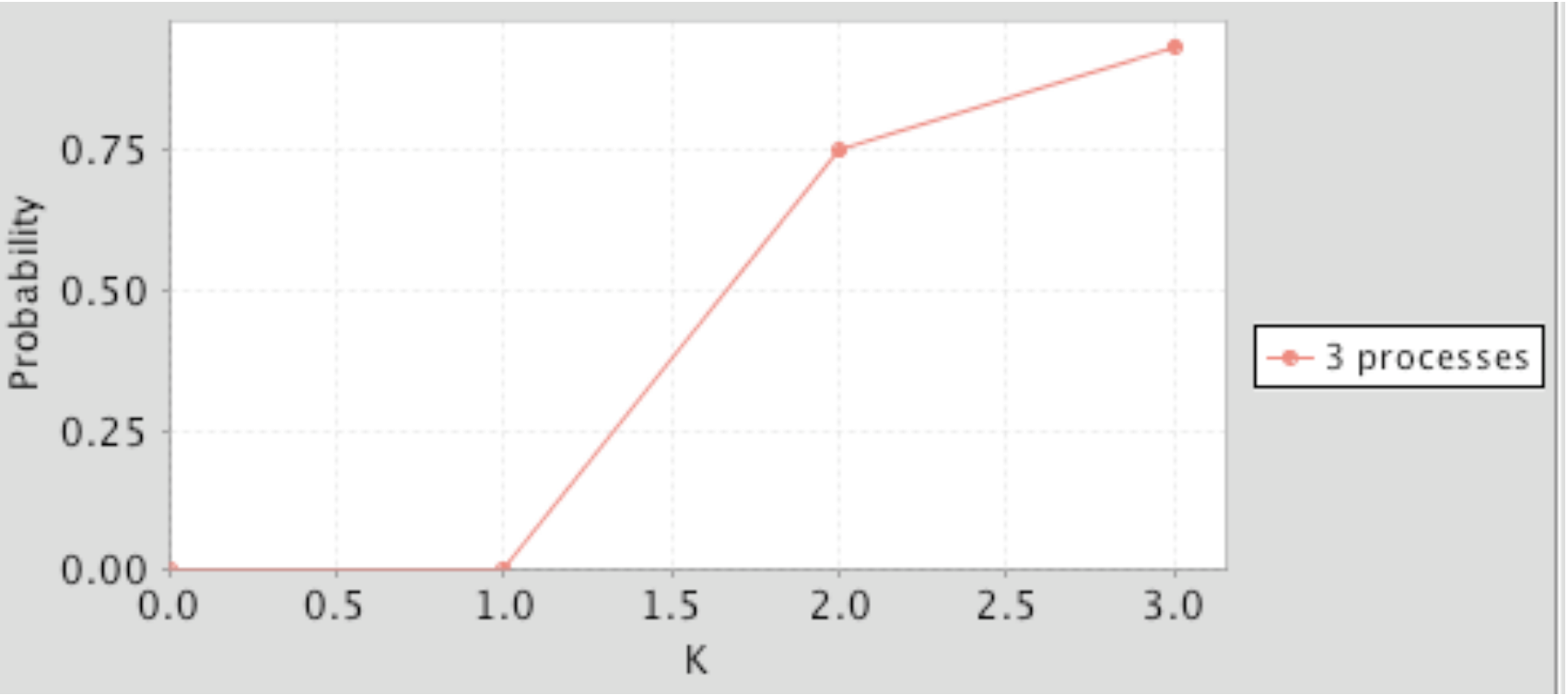}} \ \ 
 
    \caption{The minimum probability of reaching a stable configuration within K steps}
    \label{fig:whenprop}
  \end{center}
\end{figure}

\subsection{zig-zagging problem}
\subsubsection{description}
Zigzagging \cite{free99} is task on which a robot can go either foward, left, or right but (1) it cannot go forward if its preceding action was to go forward, (2) it cannot turn right if its second-last action was to go right, and (3) it cannot turn left if its second-last action was to go left. (frank thesis - page 105)

Valencia models this problem by using cells $a_1$ and $a_2$ to ``look back'' and three different distinct constrats $f,r,l \in D - \{0\}$ and the predicate symbols forward, right, left. \\

GoF \\
GoR \\
GoL \\
Zigzag \\
GoZigZag\\

Valencia verifies that GoZigzag $models$ $square (rombo right ...$.
\subsubsection{simulation}
\paragraph{Frank's style}
\begin{verbatim}
"Ntccrt Simulation" 
"" 
"" 
"Time Unit # 0" 
"Number of processes= 3" 
"" 
"Variables Value (Only those specified are printed)" 
"direction" 
"1" 
"a1" 
"0" 
"a2" 
"0" 
"changea1" 
"1" 
"changea2" 
"1" 
"..........................................." 
"" 
"Time Unit # 1" 
"Number of processes= 9" 
"" 
"Variables Value (Only those specified are printed)" 
"direction" 
"3" 
"a1" 
"1" 
"a2" 
"0" 
"changea1" 
"1" 
"changea2" 
"1" 
"..........................................." 
"" 
"Time Unit # 2" 
"Number of processes= 11" 
"" 
"Variables Value (Only those specified are printed)" 
"direction" 
"1" 
"a1" 
"3" 
"a2" 
"1" 
"changea1" 
"1" 
"changea2" 
"1" 
"..........................................." 
"" 
"Time Unit # 3" 
"Number of processes= 13" 
"" 
"Variables Value (Only those specified are printed)" 
"direction" 
"2" 
"a1" 
"1" 
"a2" 
"3" 
"changea1" 
"1" 
"changea2" 
"1" 
"..........................................." 
"" 
"Time Unit # 4" 
"Number of processes= 15" 
"" 
"Variables Value (Only those specified are printed)" 
"direction" 
"2" 
"a1" 
"2" 
"a2" 
"1" 
"changea1" 
"1" 
"changea2" 
"1" 
"..........................................." 
"" 
"Time Unit # 5" 
"Number of processes= 17" 
"" 
"Variables Value (Only those specified are printed)" 
"direction" 
"1" 
"a1" 
"2" 
"a2" 
"2" 
"changea1" 
"1" 
"changea2" 
"1" 
"..........................................." 
"" 
"Time Unit # 6" 
"Number of processes= 19" 
"" 
"Variables Value (Only those specified are printed)" 
"direction" 
"3" 
"a1" 
"1" 
"a2" 
"2" 
"changea1" 
"1" 
"changea2" 
"1" 
"..........................................." 
"" 
"Time Unit # 7" 
"Number of processes= 21" 
"" 
"Variables Value (Only those specified are printed)" 
"direction" 
"3" 
"a1" 
"3" 
"a2" 
"1" 
"changea1" 
"1" 
"changea2" 
"1" 
"..........................................." 
"" 
"Time Unit # 8" 
"Number of processes= 23" 
"" 
"Variables Value (Only those specified are printed)" 
"direction" 
"1" 
"a1" 
"3" 
"a2" 
"3" 
"changea1" 
"1" 
"changea2" 
"1" 
"..........................................." 
"" 
"Time Unit # 9" 
"Number of processes= 25" 
"" 
"Variables Value (Only those specified are printed)" 
"direction" 
"2" 
"a1" 
"1" 
"a2" 
"3" 
"changea1" 
"1" 
"changea2" 
"1" 
"..........................................." 
"" 
"Time Unit # 10" 
"Number of processes= 27" 
"" 
"Variables Value (Only those specified are printed)" 
"direction" 
"2" 
"a1" 
"2" 
"a2" 
"1" 
"changea1" 
"1" 
"changea2" 
"1" 
"..........................................." 
"" 
"Time Unit # 11" 
"Number of processes= 29" 
"" 
"Variables Value (Only those specified are printed)" 
"direction" 
"3" 
"a1" 
"2" 
"a2" 
"2" 
"changea1" 
"1" 
"changea2" 
"1" 
"..........................................." 
"" 
"Time Unit # 12" 
"Number of processes= 31" 
"" 
"Variables Value (Only those specified are printed)" 
"direction" 
"1" 
"a1" 
"3" 
"a2" 
"2" 
"changea1" 
"1" 
"changea2" 
"1" 
"..........................................." 
\end{verbatim}
\paragraph{My way}
\begin{verbatim}
"Ntccrt Simulation" 
"" 
"SKIP:: cella1 -11::" 
"SKIP:: zigzag -11::" 
"SKIP:: GoR -11::" 
"SKIP:: Exchange value ! -11::" 
"SKIP:: exch0 -11::" 
"SKIP:: I will call exch_aux0 -11::" 
"" 
"Time Unit # 0" 
"Number of processes= 5" 
"" 
"Variables Value (Only those specified are printed)" 
"direction" 
"2" 
"a1" 
"0" 
"a2" 
"0" 
"changea1" 
"1" 
"changea2" 
"1" 
"x" 
"0" 
"y" 
"0" 
"changex" 
"1" 
"changey" 
"[-2147483645..2147483645]" 
"..........................................." 
"SKIP:: zigzag -11::" 
"SKIP:: value changes!!!! -11::" 
"SKIP:: cella1 -11::" 
"SKIP:: GoF -11::" 
"SKIP:: Exchange value ! -11::" 
"SKIP:: exch2 -11::" 
"" 
"Time Unit # 1" 
"Number of processes= 15" 
"" 
"Variables Value (Only those specified are printed)" 
"direction" 
"1" 
"a1" 
"2" 
"a2" 
"0" 
"changea1" 
"1" 
"changea2" 
"1" 
"x" 
"1" 
"y" 
"0" 
"changex" 
"[-2147483645..2147483645]" 
"changey" 
"1" 
"..........................................." 
"SKIP:: zigzag -11::" 
"SKIP:: cella1 -11::" 
"SKIP:: GoL -11::" 
"SKIP:: Exchange value ! -11::" 
"SKIP:: exch1 -11::" 
"" 
"Time Unit # 2" 
"Number of processes= 18" 
"" 
"Variables Value (Only those specified are printed)" 
"direction" 
"3" 
"a1" 
"1" 
"a2" 
"2" 
"changea1" 
"1" 
"changea2" 
"1" 
"x" 
"1" 
"y" 
"1" 
"changex" 
"1" 
"changey" 
"[-2147483645..2147483645]" 
"..........................................." 
"SKIP:: zigzag -11::" 
"SKIP:: cella1 -11::" 
"SKIP:: GoF -11::" 
"SKIP:: Exchange value ! -11::" 
"SKIP:: exch3 -11::" 
"" 
"Time Unit # 3" 
"Number of processes= 21" 
"" 
"Variables Value (Only those specified are printed)" 
"direction" 
"1" 
"a1" 
"3" 
"a2" 
"1" 
"changea1" 
"1" 
"changea2" 
"1" 
"x" 
"0" 
"y" 
"1" 
"changex" 
"[-2147483645..2147483645]" 
"changey" 
"1" 
"..........................................." 
"SKIP:: zigzag -11::" 
"SKIP:: cella1 -11::" 
"SKIP:: GoR -11::" 
"SKIP:: Exchange value ! -11::" 
"SKIP:: exch1 -11::" 
"" 
"Time Unit # 4" 
"Number of processes= 24" 
"" 
"Variables Value (Only those specified are printed)" 
"direction" 
"2" 
"a1" 
"1" 
"a2" 
"3" 
"changea1" 
"1" 
"changea2" 
"1" 
"x" 
"0" 
"y" 
"2" 
"changex" 
"1" 
"changey" 
"[-2147483645..2147483645]" 
"..........................................." 
"SKIP:: zigzag -11::" 
"SKIP:: cella1 -11::" 
"SKIP:: GoR -11::" 
"SKIP:: Exchange value ! -11::" 
"SKIP:: exch2 -11::" 
"" 
"Time Unit # 5" 
"Number of processes= 27" 
"" 
"Variables Value (Only those specified are printed)" 
"direction" 
"2" 
"a1" 
"2" 
"a2" 
"1" 
"changea1" 
"1" 
"changea2" 
"1" 
"x" 
"1" 
"y" 
"2" 
"changex" 
"1" 
"changey" 
"[-2147483645..2147483645]" 
"..........................................." 
"SKIP:: zigzag -11::" 
"SKIP:: cella1 -11::" 
"SKIP:: GoF -11::" 
"SKIP:: Exchange value ! -11::" 
"SKIP:: exch2 -11::" 
"" 
"Time Unit # 6" 
"Number of processes= 30" 
"" 
"Variables Value (Only those specified are printed)" 
"direction" 
"1" 
"a1" 
"2" 
"a2" 
"2" 
"changea1" 
"1" 
"changea2" 
"1" 
"x" 
"2" 
"y" 
"2" 
"changex" 
"[-2147483645..2147483645]" 
"changey" 
"1" 
"..........................................." 
"SKIP:: zigzag -11::" 
"SKIP:: cella1 -11::" 
"SKIP:: GoL -11::" 
"SKIP:: Exchange value ! -11::" 
"SKIP:: exch1 -11::" 
"" 
"Time Unit # 7" 
"Number of processes= 33" 
"" 
"Variables Value (Only those specified are printed)" 
"direction" 
"3" 
"a1" 
"1" 
"a2" 
"2" 
"changea1" 
"1" 
"changea2" 
"1" 
"x" 
"2" 
"y" 
"3" 
"changex" 
"1" 
"changey" 
"[-2147483645..2147483645]" 
\end{verbatim}
\subsubsection{Model checking}
\paragraph{Frank style}
// The robot does not go foward twice
"init" => P<=0 [F store2 = 1 \& store3 = 1]

// The robot does not go right more than twice
"init" => P<=0 [F store2=2 \& store3 = 2 \& store4 = 2]

//The robot does not go left more than twice
"init" => P<=0 [F store2=3 \& store3 = 3 \& store4 = 3] 

// The robot always makes a good move
"init" => P>=1 [F (store2 !=1 | store3 !=1) \& (store2 != 2 | store3 != 2 | store4 != 2) \& (store2 != 3 | store3 != 3 | store4 != 3)]

\paragraph{My way}
Probability distribution for x,y positions

const x;
const y;
P ? [F store7 = x \& store8 = y]

\newcommand{\etalchar}[1]{$^{#1}$}

\end{document}